\numberwithin{equation}{section}
\newcommand{\cas}[1]{
  \begin{cases}
    #1
  \end{cases}
}
\newcommand{\BH}{\textnormal{BH}}
\newcommand{\DR}{\textnormal{DR}}
\newcommand{\End}{\textnormal{End}}
\newcommand{\Fr}{\textnormal{Fr}}
\newcommand{\GL}{\textnormal{GL}}
\newcommand{\Id}{\textnormal{Id}}
\newcommand{\TFr}{\textnormal{TFr}}
\newcommand{\ext}{\textnormal{ext}}
\newcommand{\orb}{\textnormal{orb}}
\newcommand{\ord}{\textnormal{ord}}
\newtheorem{theorem}{Theorem}[section]
\newtheorem{proposition}[theorem]{Proposition}
\newtheorem{lemma}[theorem]{Lemma}
\newtheorem{corollary}[theorem]{Corollary}
\theoremstyle{definition}
\newtheorem{remark}[theorem]{Remark}
\newtheorem{example}[theorem]{Example}
\newcommand{\cB}{\mathcal B}
\newcommand{\cC}{\mathcal C}
\newcommand{\cR}{\mathcal R}
\newcommand{\cS}{\mathcal S}
\newcommand{\cW}{\mathcal W}
\newcommand{\C}{\mathbb C}
\newcommand{\F}{\mathbb F}
\newcommand{\K}{\mathbb K}
\newcommand{\N}{\mathbb N}
\newcommand{\Z}{\mathbb Z}
\title{$p$-adic Berglund-H\"ubsch Duality}
\author{Marco Aldi}
\address{Department of Mathematics and Applied Mathematics\\
Virginia Commonwealth University\\
Richmond, VA 23284, USA}
\author{Andrija Peruni\v{c}i\'{c}}
\address{Department of Mathematics and Statistics \\
Queen's University\\
Kingston, ON K7L 3N6, Canada}
\begin{document}

\begin{abstract}
  Berglund-H\"ubsch duality is an example of mirror symmetry between
  orbifold Landau-Ginzburg models. In this paper we study a
  D-module-theoretic variant of Borisov's proof of Berglund-H\"ubsch
  duality. In the $p$-adic case, the D-module approach makes it
  possible to endow the orbifold chiral rings with the action of a
  non-trivial Frobenius endomorphism. Our main result is that the
  Frobenius endomorphism commutes with Berglund-H\"ubsch duality up to
  an explicit diagonal operator.
\end{abstract}

\maketitle

\section{Introduction}
\noindent 
Berglund-H\"ubsch duality was originally introduced \cite{BH} as a
generalization of the Greene-Plesser construction \cite{GP} of mirror
pairs. Let $W(x)\in \C[x]=\mathbb C[x_1,\ldots ,x_n]$ be an invertible
polynomial defining a Calabi-Yau hypersurface $X$ and let
$G\subset (\mathbb C^*)^n$ be a group fixing $W$. Then the
Berglund-H\"ubsch dual of the orbifold of $X$ by $G$ is the
hypersurface $X^T$, defined by the ``transpose'' invertible polynomial
$W^T(x)\in \mathbb C[x]$, orbifolded by an explicitly constructed
group $G^T\subset (\mathbb C^*)^n$ fixing $W^T$. As shown in
\cite{Kre} and \cite{Kra}, the construction of Berglund and H\"ubsch
can be further generalized to Landau-Ginzburg models with invertible
potentials (not necessarily of Calabi-Yau type) as follows. For any
invertible polynomial $W(x)$ the bigraded chiral ring of the orbifold
Landau-Ginzburg model $(W(x),G)$ is isomorphic to the (twisted) chiral
ring of the orbifold LG model $(W^T(x),G^T)$.

In the context of the vertex-algebra approach to mirror symmetry
\cite{B1}, Borisov \cite{B2} has shown that, as an isomorphism of
bigraded vector spaces (that is, disregarding the multiplicative
structure), Berglund-H\"ubsch duality can be lifted to the level of
chains. Let $\mathbb C[x,y]_0$ be the quotient of
$\C[x_1,\ldots,x_n,y_1,\ldots,y_n]$ by the ideal
$\langle x_1y_1,\ldots,x_ny_n\rangle$ and let $\bigwedge(\mathbb C^n)$
be the standard exterior representation of the Clifford algebra with
generators $e_i$, $e_i^\vee$ and relations
$e_ie_j^\vee+e_j^\vee e_i=\delta_{ij}$ for all
$i,j=1,\ldots,n$. Borisov's construction hinges on the differential
\begin{equation}\label{borisov-differential}
\delta_\infty=\sum_{i=1}^n x_i\partial_{x_i} W(x) \otimes e_i + \sum_{i=1}^n y_i\otimes e_i^\vee
\end{equation}
acting on $\mathbb C[x,y]_0\otimes \bigwedge(\mathbb C^n)$. As shown
in \cite{B2},
$(\mathbb C[x,y]_0\otimes \bigwedge(\mathbb C^n),\delta_\infty)$
contains a copy of the standard Koszul resolution of the Milnor ring
$\mathbb C[x]/dW$ in such a way that the inclusion is a
quasi-isomorphism. The starting point for this paper is to deform
$\delta_\infty$ to
\[
\delta_\pi=\sum_{i=1}^n (x_i\partial_{x_i}+\pi x_i\partial_{x_i} W(x))\otimes e_i + \sum_{i=1}^n (y_i\partial_{y_i}+\pi y_i)\otimes e_i^\vee\, ,
\]
where $\pi\in \mathbb C^*$ is an arbitrary constant. As it turns out,
the complex
$(\mathbb C[x,y]_0\otimes \bigwedge(\mathbb C^n),\delta_\pi)$ contains
a copy of the de Rham complex of the D-module
$\mathbb C[x] e^{\pi W(x)}$. The quasi-isomorphism (see e.g.\
\cite{M}) between the latter and the Milnor ring allows us to provide
an alternate chain-level realization of Berglund-H\"ubsch
duality. More precisely, our method yields a chain-level proof of the
``total unprojected'' (in the terminology of \cite{Kra})
Berglund-H\"ubsch duality, from which the usual ``projected" duality
of \cite{B2} easily follows by restriction to the invariant sectors as
in \cite{Kra}.

The key difference between our construction and \cite{B2} emerges if
one replaces $\C[x]$ with the ring
$\mathbb C_p^\dagger\langle x\rangle$ of $p$-adic overconvergent power
series. While the de Rham cohomology of the D-module
$\mathbb C_p^\dagger\langle x\rangle e^{\pi W(x)}$ (where now $\pi$ is
a fixed $(p-1)$-th root of $-p$) is still isomorphic to the $p$-adic
Milnor ring, the de Rham chain model has extra structure: a
non-trivial Frobenius endomorphism which descends to cohomology. In
this paper we show that the Frobenius endomorphisms extends naturally
to a chain map ${\rm Fr}$ acting on the full chain complex
$\mathbb C_p^\dagger\langle x,y \rangle_0\otimes \bigwedge(\mathbb
C_p^n)$.
It is then natural to ask how the Frobenius endomorphism interacts
with the Berglund-H\"ubsch duality quasi-isomorphism $\Delta$. Our
main result is that, at the level of cohomology, $\Delta$ and
${\rm Fr}$ commute up to an explicit diagonal operator whose entries
are non-negative integer powers of $p$.

The interplay between the cohomological Frobenius and
Berglund-H\"ubsch duality was first noticed in \cite{P} and used to
explore some arithmetic consequences of Berglund-H\"ubsch duality in
the spirit of \cite{W}. The present work originated as an attempt to
understand the results of \cite{P} at the level of chains. We hope
further investigate the arithmetic implications of our construction in
future work.

This paper is organized as follows. In Section 2 we review some basic
facts about invertible polynomials $W_A(x)$ over a field $\F$ defined
by a matrix $A$. In Section 3 and Section 4 we introduce our ``de
Rham'' modification $\mathcal B_A(\F)$ of Borisov's complex. In
Section 5 we point out that $\mathcal B_A(\F)$ is the total complex of
a $\Z\times \Z$-bigraded bicomplex. In Section 6 we show that
$\mathcal B_A(\F)$ is quasi isomorphic to the de Rham cohomology of a
certain D-module. To do this we follow the analogous argument given by
Borisov in \cite{B2}. However, the bigrading of \cite{B2} is no longer
preserved by our differentials and this is why we need the bigrading
introduced in Section 3 instead. In Section 7 we prove that
$\mathcal B_A(\F)$ is quasi-isomorphic to a subcomplex
$\mathcal C_A(\F)$ which is in turn canonically isomorphic to
$\mathcal C_{A^T}(\F)$. Together with the results of Section 5, this
proves unprojected Berglund-H\"ubsch duality. In Sections 8 and 9 we
specialize to the $p$-adic case and observe that the constructions of
the previous sections can be extended by replacing polynomials with
overconvergent $p$-adic power series. While not changing cohomology,
this allows for the extra room needed in order to define a natural
chain-level Frobenius endomorphism {\rm Fr} \`a la Dwork (see e.g.\
\cite{M}, \cite{SS}) whose compatibility with Berglund-H\"ubsch
duality is then addressed. Finally, in Section 10 we illustrate our
constructions by working out two simple examples.

  \bigskip\noindent \textbf{Acknowledgments:} M.A. would like to thank
  Albert Schwarz for stimulating conversations on $p$-adic methods and
  for explaining him the results of \cite{SS} which inspired the
  D-module theoretic approach to Berglund-H\"ubsch duality of the
  present paper. The work of A.P. was supported by the Natural
  Sciences and Engineering Research Council (NSERC) of Canada through
  the Discovery Grant of Noriko Yui. A.P. thanks the support of the
  NSERC. A.P. held a visiting position at the Fields Institute during
  the preparation of this paper, and would like to thank this
  institution for its hospitality.

\section{Invertible Polynomials}

Let $\F$ be a field and consider the map
$$
W\colon  \GL_n(\Z_{\geq 0}) \to \F[x]=\F[x_1 , \ldots , x_n]
$$
defined by
$$
A \mapsto W_A(x) = \sum_{i=1}^n x^{e_i A}\, ,
$$
where $\{e_i\}_{1 \leq i \leq n}$ is the standard basis of $\Z^n$, and
for $v=(v_1,\ldots,v_n) \in \Z_{\geq 0}^n$ we write
$x^{v} = x_1^{v_1}\ldots x_n^{v_n}$. For simplicity, we assume that
$\textnormal{char}\,\F =0$ or $\textnormal{char}\,\F > \det A$. A
matrix $A \in \GL_n(\Z_{\geq 0})$ is \emph{Berglund-H\"ubsch} over
$\F$ if $W_A(x)$ is an \emph{invertible polynomial}, i.e.,
$W_A(x)$ is quasi-homogeneous and
$( \partial_1 W_A(x), \ldots , \partial_n W_A(x) )$
is a regular sequence in $\F[x]$. For each
$n \in \Z_{\geq n}$ we let
\[
\BH(\F) = \bigcup_n \BH_n(\F)\, ,
\]
where
\[
\BH_n(\F) = \left\{ A \in \GL_n(\Z_{\geq 0}) \mid A\text{ is Berglund-H\"ubsch over }\F \right\}\,.
\]

\begin{remark}Berglund-H\"ubsch matrices satisfy the following
  properties.
  \begin{enumerate} 
    \item If $A \in \BH_n(\F)$ and $B \in \BH_m(\F)$, then $A \oplus B \in \BH_{n+m}(\F)$.
    \item If
      \[
      A = \left[
        \begin{array}{c|c}
          A & B \\
          \hline 
          0 & C 
        \end{array}
      \right] \in \BH(\F),
      \]
      then $C \in \BH(\F)$. We call $A \in \BH_n(\F)$ \emph{irreducible} if it
      cannot be written as $B \oplus C$ with $B,C \in \bigcup_{m\leq n} \BH_m(\F)$.
    \item Let $\cW_n \subseteq \GL_n(\Z_{\geq 0})$ be the Weyl group. Given $S\in \cW_n$ and
      $A \in \BH_n(\F)$, then $SA, AS \in \BH_n(\F)$. Moreover,
      \[
      W_{SA}(x) = W_A(x) \quad\text{and}\quad W_{AS}(x) = W_A(x) \cdot S\, ,
      \]
      where $\cdot$ denotes the right action of $\cW_n$ on
      $\F[x]$ by permutation of the variables.
  \end{enumerate}
\end{remark}

\begin{remark}
Define the \emph{group of scaling symmetries} of $A \in \BH_n(\F)$
to be $G_A = \Z^n / \Z^n A^T$. If $\F$ contains a primitive
$(\det A)$-th root of unity $\zeta$, one can consider the
$\Z_{\geq 0}^n$-action on $\F[x]$ defined for
$\lambda \in \Z^n_{\geq 0}$ by
\begin{equation}\label{eq:G_A_action_on_monomials} 
  \lambda \cdot x^\gamma = \zeta^{\gamma \lambda^T} x^\gamma.
\end{equation}
Under this action $\lambda \cdot W_A(x) = W_A(x)$ if and only if
$\lambda A^T \in (\det A)\Z^n$, which provides a canonical
identification between $G_A$ and the stabilizer of $W_A(x)$ under the
action (\ref{eq:G_A_action_on_monomials}). Unless otherwise stated, we
represent equivalence classes in $G_A$ by their canonical
representatives in $\Z^n \cap ([0,1]^n A^T)$. Using this identification,
for each $\lambda \in G_A$ we introduce a vector $J_\lambda^\vee \in \Z^n$ defined
by
\[
(J_\lambda^\vee)_i = \cas{
  0 &\text{if } \zeta^{\lambda_i} =1\, ;\\
  1 &\text{otherwise}\, ,
}
\]
and the submatrix $A^\lambda$ of $A$ such that $W_{A^\lambda}(x)$ is
obtained from $W_A(x)$ by setting $x_i = 0$ whenever
$(J_\lambda^\vee)_i = 1$.
\end{remark}

\begin{proposition}[\cite{Kre}]\label{prop:A-classification}
  Let $A \in \BH_n(\F)$ be irreducible. Then there exists $S\in \cW_n$
  such that $W_{AS}(x)$ is in one of the following canonical forms:
  \begin{enumerate}
  \item a \emph{loop},
    \[
    x_1^{a_1}x_2 + x_2^{a_2}x_3 + \ldots + x_{n-1}^{a_{n-1}}x_n + x_n^{a_n}x_1\, ,
    \]
  \item a \emph{chain},
    \[
    x_1^{a_1}x_2 + x_2^{a_2}x_3 + \ldots + x_{n-1}^{a_{n-1}}x_n + x_n^{a_n}\, .
    \]
  \end{enumerate}
\end{proposition}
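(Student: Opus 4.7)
The plan is to translate the regular sequence condition on the Jacobian of $W_A$ into combinatorial constraints on the matrix $A$ and then classify the irreducible solutions up to column permutation, following the circle of ideas introduced by Kreuzer.

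First, I would use quasi-homogeneity together with the invertibility of $A$ to produce the unique positive rational weight vector $q \in \Q_{>0}^n$ with $Aq = (1,\ldots,1)^T$, and extract from the regular sequence condition the following two structural facts: (i) every variable $x_j$ must appear in some monomial of $W_A$ with exponent $\geq 2$, since otherwise no multiple of $x_j^{a_j-1}$ could lie in the Jacobian ideal and the Milnor ring would be infinite-dimensional; (ii) the bipartite graph joining each monomial $M_i = x^{e_i A}$ to the variables in which it has exponent $\geq 2$ satisfies Hall's marriage condition. Applying Hall's theorem, a row permutation (which does not alter $W_A$) followed by an appropriate column permutation $S \in \cW_n$ (which does, producing $W_{AS}$) reduces us to the case in which the ``head'' of $M_i$ is $x_i$, so that $A_{ii} = a_i \geq 2$ for every $i$.

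Next I would show that each $M_i$ involves at most one variable besides $x_i$. If instead $M_i = x_i^{a_i} x_{j_1}\cdots x_{j_k}$ with $k \geq 2$, then comparing the weights $q_i, q_{j_1},\ldots,q_{j_k}$ forced by $M_i$ against those forced by $M_{j_1},\ldots,M_{j_k}$ via the equation $Aq = (1,\ldots,1)^T$ yields an overdetermined linear system; unwinding this produces a nontrivial $\F$-linear relation among suitable leading parts of $\partial_1 W_A,\ldots,\partial_n W_A$, contradicting the regular sequence hypothesis. This implication --- that invertibility rules out monomials of support size three or more --- is the main technical obstacle of the proof.

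At this point each $M_i$ has the form $x_i^{a_i}$ or $x_i^{a_i}x_{\sigma(i)}$ with $\sigma(i)\neq i$, and the data of $A$ is encoded by the partial function $\sigma$, equivalently a directed graph in which every vertex has out-degree at most $1$. Expanding $\det A$ over the permutations whose support lies inside the edge set of this graph, any connected component with no Fermat vertex (i.e.\ no vertex where $\sigma$ is undefined) must be a single directed cycle, since an attached tree would force $\det A = 0$; any component with a Fermat vertex must be a rooted in-tree terminating at that vertex, and the same determinant argument forces the tree to be a single path. The irreducibility of $A$ rules out multiple components, so the graph is connected and is either a single directed path terminating in a Fermat vertex or a single directed cycle. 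Reading off the monomials along this path or cycle yields exactly the chain or loop canonical forms of the proposition.
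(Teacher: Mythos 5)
The paper offers no proof of this proposition (it is imported from Kreuzer), so your argument has to stand on its own. The skeleton is right --- reduce every monomial to the form $x_i^{a_i}$ or $x_i^{a_i}x_{\sigma(i)}$ and then read the chain/loop dichotomy off the functional graph of $\sigma$ --- but both of your exclusion mechanisms are incorrect. Step (i) is false as stated: the admissible chain $W=x_1x_2+x_2^3$ (with $a_1=1$) has $x_1$ appearing only with exponent $1$, yet $(\partial_1 W,\partial_2 W)=(x_2,\,x_1+3x_2^2)$ is a regular sequence; for this $W$ your bipartite graph has a monomial with empty neighborhood, so Hall's condition fails and the normalization $A_{ii}\ge 2$ is not even true. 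The correct lemma (Kreuzer--Skarke) is that for each $j$ the polynomial must contain a monomial of the form $x_j^a$ or $x_j^a x_k$, $k\ne j$, proved by noting that otherwise every $\partial_iW$ vanishes on the $x_j$-axis. Your step ruling out monomials of support $\ge 3$ also does not go through via weights: $x_1^2x_2x_3+x_2^3+x_3^3$ has a perfectly consistent positive weight system $(1/6,1/3,1/3)$ and nonzero determinant, and is excluded only because its critical locus contains the $x_1$-axis. The standard route is instead a pigeonhole count: each of the $n$ monomials can serve as a ``pointer'' $x_j^a$ or $x_j^ax_k$ for at most one variable $j$, and every variable needs one, so every monomial is a pointer and in particular has support at most $2$.

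The more serious error is the determinant argument in your last paragraph. A rooted in-tree, or a cycle with trees attached, yields a matrix that is permutation-conjugate to a (block) triangular matrix with diagonal entries $a_i\ge 1$, so $\det A\ne 0$. Concretely, $x_1^2x_3+x_2^2x_3+x_3^2$ has $\det A=8$ and $x_1^2x_2+x_2^2x_1+x_3^2x_1$ has $\det A=6$; both fail to be Berglund--H\"ubsch only because their critical loci are positive-dimensional (e.g.\ $\{x_3=0,\ x_1^2+x_2^2=0\}$ in the first case). So branching, like large support, must be excluded by exhibiting a positive-dimensional component of $V(\partial_1W,\ldots,\partial_nW)$ --- equivalently by a refinement of the pointer count --- not by $\det A$. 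Once the out-degree $\le 1$ and in-degree $\le 1$ statements are actually established, your final step (connected functional graph is a path ending at a Fermat vertex or a single cycle) is fine and gives the chain/loop canonical forms.
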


\begin{corollary}
  Let $A \in \BH_n(\F)$. Then
\begin{enumerate}
  \item $A^T \in \BH_n(\F)$,
  \item for each $\lambda \in G_A$, we have
    $A^\lambda \in \BH_{n-|J_\lambda^\vee|}(\F)$, and
  \item the matrix defined by
    \[
    A^{\orb} := \bigoplus_{\lambda \in G_A} A^\lambda
    \]
    is in $\BH_{n |G_A| - \sum | J_\lambda^\vee | }(\F)$.
  \end{enumerate}
\end{corollary}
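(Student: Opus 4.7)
The plan is to reduce all three assertions to the classification in Proposition \ref{prop:A-classification}, using the direct-sum and Weyl-group compatibilities recorded in the preceding remarks. The first step is to observe that every $A \in \BH_n(\F)$ can, after replacement by $S_1 A S_2$ for suitable $S_1, S_2 \in \cW_n$, be put in block-diagonal form with irreducible blocks, each of which is of loop or chain type by Proposition \ref{prop:A-classification}. Since transposition, the assignment $A \mapsto A^\lambda$, and the construction of $A^{\orb}$ all commute with direct sums and with left and right multiplication by permutation matrices -- with $G_A$ and the vectors $J_\lambda^\vee$ permuting accordingly -- it suffices to prove each claim for a single irreducible block in canonical form.

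For part (1), direct inspection shows that the transpose of a loop matrix is again a loop (with the cyclic order of exponents reversed) and that the transpose of a chain matrix is a chain (with the variables read in reverse order). In each case the required relabeling is realized by an element of $\cW_n$, so Proposition \ref{prop:A-classification} applies to $A^T$ and gives $A^T \in \BH_n(\F)$.

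The main content lies in part (2). Given a canonical representative $\lambda \in \Z^n \cap [0,1]^n A^T$, let $K = \{i : (J_\lambda^\vee)_i = 1\}$ be the set of variables to be zeroed. The essential combinatorial point is that the scaling-symmetry condition $\lambda A^T \in (\det A)\Z^n$ forces the number of monomials of $W_A(x)$ that survive the substitution $x_i = 0$ for $i \in K$ to be exactly $n - |K|$, so that the induced $A^\lambda$ is automatically a square submatrix of the claimed size. For a loop or chain in canonical form this reduces to a short case analysis: killing a variable $x_i$ removes at most two adjacent monomials, and the divisibility conditions $\lambda A^T \equiv 0 \pmod{\det A}$ force the killed monomials and the killed variables to balance. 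Once $A^\lambda$ is established as a well-defined square matrix, a further case-by-case check shows that $W_{A^\lambda}(x)$ decomposes as a sum of shorter chains (and possibly isolated Fermat terms), whence invertibility follows from Proposition \ref{prop:A-classification} in the other direction. This combinatorial bookkeeping, in particular tracking how a loop breaks into several chains when the cuts fall in non-adjacent positions, is where I anticipate the main technical difficulty.

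Finally, part (3) is a formal consequence: $A^{\orb} = \bigoplus_{\lambda \in G_A} A^\lambda$ is a direct sum of BH matrices by part (2), hence itself BH by part (1) of the first remark in this section, and its size is $\sum_\lambda (n - |J_\lambda^\vee|) = n|G_A| - \sum_\lambda |J_\lambda^\vee|$.
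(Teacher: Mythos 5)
Your overall frame --- reduce to irreducible blocks in canonical form via Proposition \ref{prop:A-classification}, check loops and chains by hand, and recover the general case from compatibility with direct sums and the $\cW_n$-action --- is the intended one; the paper offers no written proof of this corollary, treating it as immediate from the classification, and your parts (1) and (3) fill that in correctly. (One small caveat for part (1): the proposition as stated only says that irreducible BH matrices \emph{can be put} in loop or chain form; to conclude $A^T\in\BH_n(\F)$ after observing that $W_{A^T}$ is a reversed loop or chain, you also need the converse --- that canonical loops and chains with the given exponents \emph{are} invertible --- which is the standard Kreuzer--Skarke fact the paper is implicitly using throughout.)

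Part (2), however, has a genuine gap, and you are looking for the difficulty in the wrong place. The crux is not a counting argument showing that ``killed monomials and killed variables balance,'' nor is there ever a case where ``a loop breaks into several chains when the cuts fall in non-adjacent positions.'' What actually needs to be proved is that the set $K=\{i:(J_\lambda^\vee)_i=1\}$ is rigid: for a loop, triviality of the action on any one variable propagates all the way around the cycle, so $K$ is either empty or all of $\{1,\dots,n\}$ and $A^\lambda$ is either $A$ itself or the empty matrix; for a chain $x_1^{a_1}x_2+\dots+x_n^{a_n}$, triviality on $x_i$ forces triviality on $x_{i+1},\dots,x_n$, so $K$ is an initial segment and $W_{A^\lambda}$ is the terminal sub-chain $x_i^{a_i}x_{i+1}+\dots+x_n^{a_n}$. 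This rigidity is exactly what the relations $\alpha_{i}^{a_i}\alpha_{i+1}=1$ encode, and it is the content of the corollary that immediately follows this one in the paper (the propagation of non-integrality of $(\beta A^{-1})_i$ and $(\beta A^{-T})_i$ along a chain or loop). Once you have it, $A^\lambda$ is manifestly a direct sum of full loops, terminal sub-chains, and empty blocks, hence square of size $n-|J_\lambda^\vee|$ and BH; without it, your assertion that the divisibility condition ``forces the balance'' is unproved, and squareness alone would in any case not suffice for invertibility. So the missing step is a short but essential lemma, not bookkeeping.
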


\begin{corollary}
Let $A\in {\rm BH_n(\mathbb F)}$ and let $\beta\in \mathbb Z^n$ such that $(\beta A^{-1})_i \in \mathbb Q\setminus \Z$.
\begin{enumerate}
\item If $A$ is a chain, then $(\beta A^{-1})_j$, $(\beta A^{-T})_k\in \mathbb Q\setminus \Z$ for all $1\le j\le i\le k\le n$.
\item If $A$ is a loop, then $(\beta A^{-1})_j, (\beta A^{-1})_k\in \mathbb Q\setminus \Z$ for all $1\le j,k\le n$.
\end{enumerate}
\end{corollary}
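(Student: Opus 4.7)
The plan is to analyze the explicit triangular or cyclic structure of $A$ in canonical form (cf.\ Proposition~\ref{prop:A-classification}) and then track $p$-adic valuations. By hypothesis there exists a prime $p$ such that $v_p((\beta A^{-1})_i)<0$, and since a rational number lies in $\Z$ iff it is $p$-integral at every $p$, it suffices to show that non-integrality at $p$ propagates to the indices asserted in the statement.

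In the chain case, write $A$ as upper triangular with diagonal $(a_1,\ldots,a_n)$ and superdiagonal of $1$'s. Setting $y=\beta A^{-1}$, the identity $yA=\beta$ unfolds into the two-term recursion $y_1a_1=\beta_1$ and $y_{j-1}+y_ja_j=\beta_j$ for $2\le j\le n$. The essential valuation estimate is the following: if $v_p(y_{j-1})<0$, then since $\beta_j\in\Z$ we have $v_p(y_ja_j)=v_p(\beta_j-y_{j-1})=v_p(y_{j-1})<0$, whence $v_p(y_j)=v_p(y_{j-1})-v_p(a_j)<0$ as well. Iterating this estimate propagates non-integrality of $y$ along the chain. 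Performing the parallel analysis on the transposed system $zA^T=\beta$, with $z=\beta A^{-T}$, yields a recursion that runs in the opposite direction along the indices, so non-integrality of $z$ propagates in the complementary direction. Combining the two accounts for both intervals in the statement.

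In the loop case, $A$ is cyclic with the same diagonal and $A_{j,\, j+1\bmod n}=1$, so $yA=\beta$ becomes $y_{j-1}+y_ja_j=\beta_j$ with indices taken modulo $n$. The same valuation step $v_p(y_{j-1})<0\Rightarrow v_p(y_j)<0$ applies, and starting from the given index $i$ a single traverse around the cycle visits every coordinate, forcing $v_p(y_j)<0$ for all $1\le j\le n$.

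The arithmetic content of the argument collapses to the single observation that an element of $\Z$ cannot cancel a $p$-adic pole of negative valuation; once the recursions are written down the rest is automatic. I expect no deep obstruction, only a bookkeeping step: one must correctly identify the direction in which the recursion propagates non-integrality (which is governed by whether the chain matrix is read as upper- or lower-triangular in the conventions of Section~2) and match this direction against the indexing intervals in the statement.
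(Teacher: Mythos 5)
Your recursions are precisely the identity on which the paper's (very terse) proof rests, namely
\[
A_{ii}\,(\beta A^{-T})_i + (\beta A^{-T})_{i+1} \;=\; \beta_i \;=\; (\beta A^{-1})_{i-1} + A_{ii}\,(\beta A^{-1})_i\,,
\]
and the $p$-adic valuation dressing, while correct, is unnecessary: the entire content of your propagation step is that if $y_{j-1}\notin\Z$ and $\beta_j\in\Z$ then $a_jy_j=\beta_j-y_{j-1}\notin\Z$, hence $y_j\notin\Z$. The loop case is handled correctly: the cyclic recursion carries non-integrality once around the cycle and reaches every coordinate.

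The chain case, however, has two genuine gaps. First, the direction issue you defer to ``bookkeeping'' is not bookkeeping. Your step propagates non-integrality of $y=\beta A^{-1}$ only toward \emph{larger} indices ($y_{j-1}\notin\Z\Rightarrow y_j\notin\Z$; the reverse implication fails because $a_jy_j$ can be integral while $y_j$ is not), whereas the statement asserts non-integrality for $j\le i$. These directions genuinely disagree: for $W_A=x_1^2x_2+x_2^3$ and $\beta=(0,1)$ one gets $\beta A^{-1}=(0,\tfrac13)$, so non-integrality at $i=2$ does not descend to $j=1$. Any correct completion must either reverse the interval or use a different mechanism; you cannot get the claimed interval from this recursion. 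Second, and independently, your proof never transfers information from the $A^{-1}$ system to the $A^{-T}$ system. The hypothesis concerns $\beta A^{-1}$ only, so your ``parallel analysis'' of $zA^{T}=\beta$ has no non-integral entry from which to start propagating; running the two recursions separately can never yield the conclusion about $(\beta A^{-T})_k$. The only bridge between the two systems is the shared middle term $\beta_i$ in the displayed identity (equivalently, the explicit expressions of the entries of $A^{-1}$ and $A^{-T}$ as alternating sums of reciprocals of products of the $a_j$'s), and a complete argument must exploit that link; your proposal does not.
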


\begin{proof}
Both statements follow from
\[
 A_{ii}^T (\beta A^{-T})_i  + (\beta A^{-T})_{i+1} = \beta_i = (\beta A^{-1})_{i-1}  + A_{ii} (\beta A^{-1})_i\,,
\]
where $i$ is considered modulo $n$ in the case of loops.
\end{proof}

\section{Exterior Operators}

Let $e_1 , \ldots , e_n$ be the standard generators of $\F^n$. We
denote by $\bigwedge (\F^n)$ the exterior algebra
$\bigwedge (\F e_1 \oplus \ldots \oplus \F e_n )$ viewed as a
representation of the Clifford algebra ${\rm Cl}_n(\F)$ with
generators $e_i$ (multiplication) and $e_i^\vee$ (contraction), and
(odd) commutators $[e_i,e_j^\vee] = \delta_{ij}$ for all
$1\leq i,j \leq n$. As an $\F$-module, $\bigwedge(\F^n)$ is generated
by monomials $e^I=e_1^{I_1}\ldots e_n^{I_n}$, where
$I=(I_1,\ldots,I_n)\in \mathbb Z_{\ge 0}^n$. In particular, $e^I=0$ if
and only if $I_i\ge 2$ for some $i$. Given $A \in \BH_n(\F)$ and
$\pi \in \F^*$, for $1 \leq i \leq n$ we also consider
\[
E_{A,i} = \pi \sum_{j=1}^n e_j A_{ji}^T 
\quad \text{and} \quad
E_{A,i}^\vee = \frac{1}{\pi} \sum_{j=1}^n e_j^\vee (A^{-1})_{ji}\,,
\]
so that
\begin{equation*}
  [E_{A,i} E_{A,j}^\vee] = \sum_{k,m} A_{ki}^T (A^{-1})_{mj} [e_k, e_m^\vee] = \sum_k A_{ik} (A^{-1})_{kj} = \delta_{ij}\, .
\end{equation*}

\begin{lemma}
  If $*^A \in {\rm GL}\left(\bigwedge(\F^n)\right)$ is defined by
  \[
  *^A(e_{i_1}, \ldots , e_{i_k}) = E_{A^T,i_1}^\vee E_{A^T,i_2}^\vee \ldots E_{A^T,i_k}^\vee
  \left(  E_{A^T,1} E_{A^T,2} \ldots E_{A^T,n} \right),
  \]
  then
  \begin{enumerate}
  \item \label{star-A-part-1} $*^A E_{A,i} = e_i^\vee *^A$, $*^A E_{A,i}^\vee = e_i *^A$, and
  \item \label{star-A-part-2} $*^{A^T}*^A$ commutes with the action of ${\rm Cl}_n(\F)$ on $\bigwedge(\F^n)$.
  \end{enumerate}
\end{lemma}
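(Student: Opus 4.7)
The proof rests on recasting $*^A$ in a form that makes its interaction with the Clifford generators transparent. Since the operators $E_{A^T,i}^\vee$ pairwise anticommute and square to zero (being linear combinations of the $e_j^\vee$), the assignment $e_i\mapsto E_{A^T,i}^\vee$ extends to an $\F$-algebra map $\bigwedge(\F^n)\to{\rm Cl}_n(\F)$, $\omega\mapsto\tilde\omega$, and the defining formula for $*^A$ can be rewritten as
\[
*^A(\omega)=\tilde\omega\cdot(E_{A^T,1}\cdots E_{A^T,n}).
\]
From this presentation I read off at once the intertwining $*^A\circ L_{e_i}=L_{E_{A^T,i}^\vee}\circ *^A$, where $L_x$ denotes left Clifford multiplication by $x$. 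A parallel direct check, exploiting that $E_{A^T,i}$ annihilates the top element $E_{A^T,1}\cdots E_{A^T,n}$ (so that the ``remainder'' term produced by Clifford commutation drops out), gives $*^A\circ L_{e_i^\vee}=L_{E_{A^T,i}}\circ *^A$.

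For part (\ref{star-A-part-1}) I would expand $E_{A,i}=\pi\sum_j A_{ij}e_j$ and $E_{A,i}^\vee=\frac{1}{\pi}\sum_j(A^{-1})_{ji}e_j^\vee$ from the definitions, apply the two intertwiners above term by term, and recognize the resulting linear combinations $\pi\sum_j A_{ij}E_{A^T,j}^\vee$ and $\frac{1}{\pi}\sum_j(A^{-1})_{ji}E_{A^T,j}$ as $e_i^\vee$ and $e_i$ respectively. In each case the collapse amounts to the elementary matrix identity $AA^{-1}=I$ after unpacking the defining expressions for $E_{A^T,\cdot}$ and $E_{A^T,\cdot}^\vee$ in terms of the $e_j$ and $e_j^\vee$.

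For part (\ref{star-A-part-2}), applying part (\ref{star-A-part-1}) with $A$ replaced by $A^T$ (and noting $(A^T)^T=A$) yields $*^{A^T}E_{A^T,i}=e_i^\vee *^{A^T}$ and $*^{A^T}E_{A^T,i}^\vee=e_i *^{A^T}$. Combining these with the two intertwiners from the opening paragraph gives
\[
*^{A^T}*^A e_i=*^{A^T}E_{A^T,i}^\vee *^A=e_i *^{A^T}*^A,\qquad *^{A^T}*^A e_i^\vee=*^{A^T}E_{A^T,i} *^A=e_i^\vee *^{A^T}*^A,
\]
and since $e_i,e_i^\vee$ generate ${\rm Cl}_n(\F)$, the statement follows. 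The only delicate point will be the index bookkeeping between $A$, $A^T$, $A^{-1}$, and $A^{-T}$; conceptually, no step beyond the substitution reformulation of $*^A$ and a single application of $AA^{-1}=I$ is required.
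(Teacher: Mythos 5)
Your proof is correct and follows essentially the same route as the paper: establish the intertwining relations $*^A e_i = E_{A^T,i}^\vee\, *^A$ and $*^A e_i^\vee = E_{A^T,i}\, *^A$, expand $E_{A,i}$ and $E_{A,i}^\vee$ and collapse via $A A^{-1}=I$, then deduce part (2) by composing with the same relations for $A^T$. The only difference is that you justify the second intertwining relation (via the vanishing of $E_{A^T,i}$ on the top element) where the paper simply asserts it ``by definition,'' which is a welcome clarification but not a change of method.
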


\begin{proof}
  By definition,
  \[
  *^A e_i = E_{A^T,i}^\vee *^A \quad\text{and}\quad *^A e_i^\vee = E_{A^T,i} *^A.
  \]
  Therefore,
  \begin{align*}
    *^A E_{A,i} &= *^A \pi \sum_j e_j A_{ji}^T = \pi \sum_j E_{A^T,j}^\vee A_{ji}^T *^A = \sum_{k,j} e_k^\vee(A^{-T})_{kj} A_{ji}^T *^A  e_i^\vee *^A\, .
  \end{align*}
  Similarly, $*^A E_{A,i}^\vee = e_i *^A$. This proves part (\ref{star-A-part-1}). Part (\ref{star-A-part-2}) follows from
  \[
  *^{A^T} *^A e_i = *^{A^T} E_{A^T,i}^\vee *^A = e_i *^{A^T} *^A
  \]
  and
  \[
  *^{A^T}*^Ae_i^\vee = *^{A^T} E_{A,i}*^A = e_i^\vee *^{A^T} *^A\,.
  \]
\end{proof}

\begin{remark}
  The operator
  \[
  \ext = \sum_{i=1}^n e_ie_i^\vee = \sum_{i=1}^n E_{A,i}E_{A,i}^\vee
  \]
  is diagonal on $\bigwedge(\F^n)$. If $\textnormal{char}\, \F =0$, its eigenvalues count the
  total exterior degree. Moreover,
  \[
  *^A\,\ext = \sum_{i=1}^n e_i^\vee e_i *^A = (n\,\Id - \ext) *^A.
  \]
\end{remark}

\section{The Basic Complex}
Given a graded vector space $V$ endowed with a differential $d$ of
degree $1$, we denote by $(V,d)$ the corresponding chain complex and
by $H(V,d)$ its cohomology. If $V$ is bigraded and $d,d'$ are graded
commutative differentials of bidegree $(1,0)$ and $(0,1)$
respectively, we denote the corresponding bicomplex by $(V,d,d')$ and
by $H(V,d,d')$ its total cohomology. If $V$ is vector space acted upon
by a collection of commuting endomorphisms $\phi_1,\ldots,\phi_n$, we
denote the corresponding Koszul complex by
${\rm Kos}(V,\phi_1,\ldots,\phi_n)$.

Given $A \in \BH_n(\F)$, consider the subring $\widetilde{\cR_A}(\F)$
of $\F[x_1,\ldots,x_n,y_1,\ldots,y_n]$ generated by monomials
$x^\gamma y^\lambda$ such that $(\lambda A^{-T})_i \geq 0$ for all $1 \leq i \leq n$. We define
$\cR_A(\F)$ to be the quotient of $\widetilde{\cR_A}(\F)$ by the ideal
generated by monomials $x^\gamma y^\lambda$ for which
$\gamma A^{-1} \lambda^T > 0$. Given $\pi \in \F^*$, we define
$\theta_{A,i}, T_{A,i}^\vee,\psi_{A,i}^\vee,\varphi_{A,i}, \in
\End_\F \left( \cR_A(\F) \right)$ by the formulas
\begin{align*}
\theta_{A,i}(x^\gamma y^\lambda) &= \gamma_i\, x^\gamma y^\lambda\, ; \\
T_{A,i}^\vee(x^\gamma y^\lambda) & = \pi^{-1}(\lambda A^{-T})_i\, x^\gamma y^\lambda\, ;\\
\psi_{A,i}^\vee(x^\gamma y^\lambda)& = x^\gamma y^{\lambda + e_iA^T}\, ;\\
\varphi_{A,i}(x^\gamma y^\lambda) &= \pi \left(\theta_i W_A(x) \right)x^\gamma y^\lambda =\pi\sum_{j=1}^n A_{ji}\, x^{\gamma + e_j A}y^\lambda\,. 
\end{align*}
We also define the odd linear endomorphisms of
$\cR_A(\F) \otimes \bigwedge(\F^n)$
\[
d_{A,i} = \left(  \theta_{A,i} + \varphi_{A,i} \right) e_i\,, \quad  d_A =\sum_{i=1}^n d_{A,i} 
\]
as well as
\[
d_{A,i}^\vee = \left( T_{A,i}^\vee + \psi_{A,i}^\vee \right) e_i^\vee\, ,\quad d_A^\vee = \sum_{i=1}^n d_{A,i}^\vee\, .
\]

\begin{lemma}\label{chain} $\cB_A(\F) = \left( \cR_A(\F) \otimes \bigwedge (\F^n) , d_A+d_A^\vee \right)$ is a chain complex.
\end{lemma}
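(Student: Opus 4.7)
The plan is to verify directly that $(d_A + d_A^\vee)^2 = d_A^2 + \{d_A, d_A^\vee\} + (d_A^\vee)^2$ vanishes, handling the three summands separately. Throughout I would exploit that all four families $\theta_{A,i}, \varphi_{A,i}, T_{A,i}^\vee, \psi_{A,i}^\vee$ act only on $\cR_A(\F)$ and hence commute with every Clifford generator, so that the relations $e_ie_j + e_je_i = 0$ and $e_ie_j^\vee + e_j^\vee e_i = \delta_{ij}$ reduce each summand to a combination of commutators of $\cR_A(\F)$-operators weighted by elementary exterior monomials.

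For $d_A^2$, antisymmetry of $e_ie_j$ yields $d_A^2 = \sum_{i<j}[\theta_{A,i}+\varphi_{A,i},\,\theta_{A,j}+\varphi_{A,j}]\,e_ie_j$. The brackets $[\theta_{A,i},\theta_{A,j}]$ (diagonal operators) and $[\varphi_{A,i},\varphi_{A,j}]$ (multiplication by polynomials in $x$) vanish, while a short computation gives
\[
[\theta_{A,i},\varphi_{A,j}](x^\gamma y^\lambda) = \pi\sum_k A_{ki}A_{kj}\,x^{\gamma+e_kA}y^\lambda,
\]
which is symmetric in $(i,j)$; hence $[\theta_{A,i}+\varphi_{A,i},\theta_{A,j}+\varphi_{A,j}] = [\theta_{A,i},\varphi_{A,j}] - [\theta_{A,j},\varphi_{A,i}] = 0$ and $d_A^2 = 0$. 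The vanishing of $(d_A^\vee)^2$ is analogous: the only nonzero cross-bracket $[T_{A,i}^\vee,\psi_{A,j}^\vee] = \pi^{-1}\delta_{ij}\psi_{A,i}^\vee$ is trivially symmetric in $(i,j)$.

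For the mixed anticommutator, using $e_ie_j^\vee + e_j^\vee e_i = \delta_{ij}$ I would split
\[
\{d_A,d_A^\vee\} = \sum_i(T_{A,i}^\vee+\psi_{A,i}^\vee)(\theta_{A,i}+\varphi_{A,i}) + \sum_{i,j}[\theta_{A,i}+\varphi_{A,i},\,T_{A,j}^\vee+\psi_{A,j}^\vee]\,e_ie_j^\vee.
\]
All four cross-commutators $[\theta_{A,i},T_{A,j}^\vee]$, $[\theta_{A,i},\psi_{A,j}^\vee]$, $[\varphi_{A,i},T_{A,j}^\vee]$, $[\varphi_{A,i},\psi_{A,j}^\vee]$ vanish already on $\widetilde{\cR_A}(\F)$, since $\theta$ and $T^\vee$ are diagonal while $\varphi$ shifts only in $x$ and $\psi^\vee$ shifts only in $y$; this kills the wedge-containing sum.

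The main obstacle is therefore to show that the remaining diagonal sum $\sum_i(T_{A,i}^\vee+\psi_{A,i}^\vee)(\theta_{A,i}+\varphi_{A,i})$ vanishes on $\cR_A(\F)$, and this is where the Koszul-like quotient by monomials with $\gamma A^{-1}\lambda^T > 0$ is essential. On a representative $x^\gamma y^\lambda$ with $\gamma A^{-1}\lambda^T = 0$, the expansion produces four families of terms. The scalar-coefficient piece $\sum_i\gamma_i\pi^{-1}(\lambda A^{-T})_i\,x^\gamma y^\lambda = \pi^{-1}(\gamma A^{-1}\lambda^T)\,x^\gamma y^\lambda$ vanishes directly. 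The pieces $\sum_i \gamma_i x^\gamma y^{\lambda+e_iA^T}$ and $\sum_{i,k}\pi A_{ki}\,x^{\gamma+e_kA}y^{\lambda+e_iA^T}$ vanish in $\cR_A(\F)$ because any $\gamma_i > 0$, respectively $A_{ki} > 0$, moves the shifted monomial into the defining ideal. Finally, using the identity $\sum_i A_{ki}(\lambda A^{-T})_i = \lambda_k$, the last sum collapses to $\sum_k\lambda_k\,x^{\gamma+e_kA}y^\lambda$, which is zero in $\cR_A(\F)$ since either $\lambda_k = 0$ or the shifted monomial already lies in the ideal.
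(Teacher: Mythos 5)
Your proposal is correct and follows essentially the same route as the paper: reduce $(d_A+d_A^\vee)^2=0$ via the Clifford relations to commutativity of the operators $\theta_{A,i}+\varphi_{A,i}$ and $T_{A,j}^\vee+\psi_{A,j}^\vee$ on $\cR_A(\F)$ plus the vanishing of the diagonal products $(T_{A,i}^\vee+\psi_{A,i}^\vee)(\theta_{A,i}+\varphi_{A,i})$, the latter forced by the quotient by monomials with $\gamma A^{-1}\lambda^T>0$. The only (harmless) differences are that you verify the commutativity the paper merely asserts, and you kill the diagonal terms after summing over $i$ (using $\sum_i \gamma_i(\lambda A^{-T})_i=\gamma A^{-1}\lambda^T$ and $\sum_i A_{ki}(\lambda A^{-T})_i=\lambda_k$) where the paper shows each product operator vanishes individually.
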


\begin{proof}
  The morphism $d_A$ is the Koszul differential for the sequence
  \[
  \left( \theta_{A,1} + \varphi_{A,1} , \theta_{A,2} + \varphi_{A,2} , \ldots, \theta_{A,n} + \varphi_{A,n} \right)
  \]
  of commuting operators acting on $\cR_A(\F)$. Therefore,
  $[d_A,d_A]=0$, and similarly $[d_A^\vee,d_A^\vee]=0$. Moreover, since
  $\left( \theta_{A,i} + \varphi_{A,i} \right)$ and
  $\left( T_{A,j}^\vee + \psi_{A,j}^\vee \right)$ commute, 
    \begin{align*}
    [d_{A,i},d_{A,j}^\vee] &= \left[ \left( \theta_{A,i} + \varphi_{A,i} \right) e_i ,
                                   \left( T_{A,j}^\vee + \psi_{A,j}^\vee \right) e_j^\vee        
                           \right] \\
                         &= \left( \theta_{A,i}+\varphi_{A,i} \right) 
                            \left( T_{A,j}^\vee + \psi_{A,j}^\vee \right) 
                            \left[ e_i , e_j^\vee \right] \\
                         &= \left( \theta_{A,i}+\varphi_{A,i} \right) 
                            \left( T_{A,j}^\vee + \psi_{A,j}^\vee \right) 
                            \delta_{ij}\, .
  \end{align*}
  If
  \[0 \neq \left( \theta_{A,i}\, T_{A,i}^\vee \right)(x^\gamma y^\lambda)
  = \gamma_i\, (A^{-1}\lambda^T)_i\,x^\gamma y^\lambda\,,
  \]
  then $x^\gamma y^\lambda = 0$ in $\cR_A(\F)$ and thus $\left( \theta_{A,i} + \varphi_{A,i} \right) \left( T_{A,i}^\vee + \psi_{A,i}^\vee\right) = 0$. For
  \[
  \left( \varphi_{A,i}\, T_{A,i}^\vee \right) (x^\gamma y^\lambda)
  = \sum_{j=1}^n A_{ji} (\lambda A^{-T})_i\, x^{\gamma + e_j A} y^\lambda
  \]
  we note that if for some $j$ we have $A_{ji},(\lambda A^{-T})_i > 0$, then
  \[
    (\gamma+e_jA) A^{-1} \lambda^T > (e_j A) (A^{-1} \lambda^T) =\sum_{m=1}^n A_{jm} (A^{-1}\lambda^T)_m > A_{ji} (A^{-1}\lambda^T)_i > 0 
  \]
  and conclude as before that $x^{\gamma + e_j A} y^\lambda = 0$ in $\cR_A(\F)$. 
  It is similarly shown that $\varphi_{A,i}\, \psi_{A,i}^\vee = 0$ and
  $\theta_{A,i}\, \psi_{A,i}^\vee = 0$. Therefore,
  $[d_A,d_A^\vee] = 0$.
\end{proof}

\begin{remark}
Note that for any monomial $x^\gamma y^\lambda$, the vector $\lambda$ encodes
a group element of $G_A$ by (\ref{eq:G_A_action_on_monomials}). For
$\lambda \in G_A$ we take $0 \leq (\lambda A^{-T} )_i < 1$ for each $i$, 
so $\gamma A^{-1} \lambda^T = 0$ means that $\gamma_i = 0$ if $\lambda$ acts
non-trivially on $x_i$, that is, if $(J_\lambda^\vee)_i = 1$.
\end{remark}

\section{Bigrading}

Let $P_{A,i}^\vee \in \End_\F\left(\cR_A(\F)\right)$ be given by
\[
P_{A,i}^\vee(x^\gamma y^\lambda) 
= \cas{
  0,                &\text{if }(\lambda A^{-T})_i =0\, ; \\
  x^\gamma y^\lambda, &\text{otherwise}\, .
}
\]

\begin{lemma}
  Let $Q_{A,i} , Q_{A,i}^\vee, Q_A$ and $Q_A^\vee$ be linear
  endomorphisms of $\cR_A(\F) \otimes \bigwedge(\F^n)$ defined by
  \[
  Q_{A,i}^\vee = P_{A,i}^\vee e_i^\vee e_i
  \quad\text{and}\quad
  Q_{A,i} = e_i e_i^\vee + Q_{A,i}^\vee
  \]
  as well as
  \[
  Q_A = \sum_{i=1}^n Q_{A,i}
  \quad\text{and}\quad
  Q_A^\vee = \sum_{i=1}^n Q_{A,i}^\vee\, .
  \]
  Then for each $1 \leq i,j \leq n$,
  \begin{enumerate}
    \item \label{lem:grading-comm-1} $[Q_{A,i}, Q_{A,j}^\vee] = 0$,
    \item \label{lem:grading-comm-2} $[Q_{A,i}^\vee, d_{A,j}] = 0$ and $[Q_{A,i}, d_{A,j}] = \delta_{ij}d_{A,j}$,
    \item \label{lem:grading-comm-3} $[Q_{A,i}, d_{A,j}^\vee] = 0$ and $[Q_{A,i}^\vee, d_{A,j}^\vee] = \delta_{ij} d_{A,j}^\vee$.
  \end{enumerate}
\end{lemma}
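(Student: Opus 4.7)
My plan is to exploit the tensor product structure of $\cR_A(\F)\otimes\bigwedge(\F^n)$: each of $Q_{A,i}$, $Q_{A,i}^\vee$, $d_{A,j}$, $d_{A,j}^\vee$ splits as a sum of terms of the form (algebra endomorphism of $\cR_A(\F)$) $\otimes$ (Clifford monomial), so I would treat each assertion by splitting into the off-diagonal case $i\neq j$ and the diagonal case $i=j$. For $i\neq j$ every claim is routine: on the algebra side, $P_{A,i}^\vee$ depends only on the $i$-th coordinate of $\lambda A^{-T}$, which is unchanged by the shift $\lambda\mapsto\lambda+e_j A^T$ when $i\neq j$, so it commutes with $T_{A,j}^\vee$, $\psi_{A,j}^\vee$, $\theta_{A,j}$ and $\varphi_{A,j}$; on the exterior side the disjoint-index Clifford generators (anti)commute to give vanishing commutators of the stated parities.

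For the diagonal case $i=j$ I would assemble the following Clifford identities, all immediate from $e_i^2=(e_i^\vee)^2=0$ and $[e_i,e_i^\vee]=1$:
\[
e_ie_i^\vee\cdot e_i^\vee e_i \;=\; e_i^\vee e_i\cdot e_ie_i^\vee \;=\; 0\,,\qquad [e_ie_i^\vee,e_i]=e_i\,,\qquad [e_ie_i^\vee,e_i^\vee]=-e_i^\vee\,,
\]
together with the two elementary algebra identities
\[
P_{A,i}^\vee T_{A,i}^\vee = T_{A,i}^\vee\,,\qquad P_{A,i}^\vee \psi_{A,i}^\vee = \psi_{A,i}^\vee\,,
\]
(the first because $T_{A,i}^\vee$ kills whatever $P_{A,i}^\vee$ kills, the second because $\psi_{A,i}^\vee$ sends $(\lambda A^{-T})_i$ to $(\lambda A^{-T})_i+1\geq 1$). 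Part (\ref{lem:grading-comm-1}) then reduces to $[e_ie_i^\vee,P_{A,i}^\vee e_i^\vee e_i]=0$, which follows from the first Clifford identity above. Part (\ref{lem:grading-comm-3}) reduces to $[Q_{A,i}^\vee,d_{A,i}^\vee]=d_{A,i}^\vee$ together with $[Q_{A,i},d_{A,i}^\vee]=0$: the former collapses via the algebra identities and $e_i^\vee e_i e_i^\vee = e_i^\vee$ to $Q_{A,i}^\vee d_{A,i}^\vee = (T_{A,i}^\vee+\psi_{A,i}^\vee)e_i^\vee$, while $d_{A,i}^\vee Q_{A,i}^\vee=0$ is killed by $(e_i^\vee)^2=0$; the latter then follows by combining with $[e_ie_i^\vee,e_i^\vee]=-e_i^\vee$.

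The single non-formal step, which I expect to be the main obstacle, occurs in part (\ref{lem:grading-comm-2}): the relation $[Q_{A,i}^\vee,d_{A,i}]=0$ reduces (via $e_i^\vee e_i\cdot e_i=0$ and $e_i\cdot e_i^\vee e_i=e_i$) to the identity
\[
(\theta_{A,i}+\varphi_{A,i})\,P_{A,i}^\vee \;=\; 0 \quad\text{on } \cR_A(\F)\,.
\]
This is precisely the vanishing already established inside the proof of Lemma \ref{chain}: from $\lambda A^{-T}A^T=\lambda$ one reads off $\lambda_j=\sum_k(\lambda A^{-T})_k A_{jk}\geq (\lambda A^{-T})_iA_{ji}$, so on a nonzero monomial $x^\gamma y^\lambda\in\cR_A(\F)$ with $(\lambda A^{-T})_i>0$, both $\gamma_i(A^{-1}\lambda^T)_ix^\gamma y^\lambda$ and each $A_{ji}x^{\gamma+e_jA}y^\lambda$ (with $A_{ji}>0$) satisfy $\gamma' A^{-1}\lambda^T>0$ and so lie in the defining ideal of $\cR_A(\F)$. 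Granting this, the remaining half of part (\ref{lem:grading-comm-2}) is immediate by writing $[Q_{A,i},d_{A,i}]=[e_ie_i^\vee,d_{A,i}]+[Q_{A,i}^\vee,d_{A,i}]=d_{A,i}+0$.
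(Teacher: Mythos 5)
Your proposal is correct and follows essentially the same route as the paper: everything reduces to elementary Clifford identities together with the key vanishing of $(\theta_{A,i}+\varphi_{A,i})$ composed with $P_{A,i}^\vee$, which you justify exactly as the paper does by showing the resulting monomials lie in the defining ideal of $\cR_A(\F)$ (as in the proof of Lemma~\ref{chain}). The only cosmetic difference is in part (\ref{lem:grading-comm-1}), where the paper simply observes that $Q_{A,i}$ and $Q_{A,j}^\vee$ share the monomial eigenbasis while you verify the diagonal case by the identity $e_ie_i^\vee\cdot e_i^\vee e_i=0=e_i^\vee e_i\cdot e_ie_i^\vee$.
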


\begin{proof}
  The operators $Q_{A,i}$ and $Q_{A,j}^\vee$ commute because they
  have monomials of the form $ x^\gamma y^\lambda e^I $ as a common basis of eigenvectors,
  which proves (\ref{lem:grading-comm-1}). For (\ref{lem:grading-comm-2})
  \begin{align*}
    [Q_{A,i}^\vee,d_{A,j}] &= \left[ P_{A,i}^\vee\, e_i^\vee e_i, 
                                   \left(\theta_{A,j} + \varphi_{A,j}\right) e_j
                          \right] \\
                         &= P_{A,i}^\vee \left(\theta_{A,j} + \varphi_{A,j} \right)
                            [e_i^\vee e_i, e_j] \\
                         &= \delta_{ij} P_{A,i}^\vee 
                            \left( \theta_{A,j} + \varphi_{A,j} \right) e_j^\vee.
  \end{align*}
  The proof of Lemma (\ref{chain}) shows that
  $P_{A,i}^\vee\, \varphi_{A,i} = 0$. Similarly,
  $P_{A,i}^\vee\, \theta_{A,i}(x^\gamma y^\lambda) \neq 0$ implies
  that $\gamma_i (A^{-1} \lambda^T)_i > 0$ so that the corresponding
  term is $0$ in $\cR_A(\F)$. Therefore,
  $[Q_{A,i}^\vee,d_{A,j}] = 0$, which in turn implies that
  \begin{align*}
    [Q_{A_i},d_{A,j}] = \left(\theta_{A,j} + \varphi_{A,j} \right)[e_ie_i^\vee, e_j] = \delta_{ij} d_{A,j}\, .
  \end{align*}
For part (\ref{lem:grading-comm-3}), if $i\neq j$
  \[
  [P_{A,i}^\vee, T_{A,j}^\vee + \psi_{A,j}^\vee] = 0 =[e_i^\vee e_i, e_j^\vee]\, ,
  \]
  and otherwise $e_i^\vee e_i^\vee e_i = 0$ and
  $e_i^\vee e_i e_i^\vee = e_i^\vee$, which means that
  \begin{align*}
    [Q_{A,i}^\vee, d_{A,j}^\vee] = \delta_{ij} P_{A,j}^\vee 
                                  \left(T_{A,j}^\vee + \psi_{A,j}^\vee\right) e_i^\vee = \delta_{ij} d_{A,j}^\vee\, .
  \end{align*}
  Similarly,
  \begin{align*}
    [Q_{A,i}, d_{A,j}^\vee] &= \left[e_i e_i^\vee , 
                              \left( T_{A,j}^\vee + \psi_{A,j}^\vee \right) e_j^\vee 
                             \right] + \delta_{ij} d_{A,j}^\vee\\
                          &= \delta_{ij} \left( - \left(T_{A,j}^\vee + \psi_{A,j}^\vee\right)
                                               e_j^\vee e_j e_j^\vee + d_{A,j}^\vee
                                        \right) = 0\, .
  \end{align*}
\end{proof}

\begin{remark}
  In particular, with respect to the $\text{Spec}(Q_A) \times \text{Spec}(Q_A^\vee)$
  bigrading, $\cB_A(\F)$ is the total complex of the bicomplex
  $\left( \cR_A(\F) \otimes \bigwedge(\F^n) , d_A, d_A^\vee \right)$.
 
\end{remark}

\section{Unprojected Orbifold de Rham Cohomology}

Given $A \in \BH_n(\F)$ and $\pi \in \F^*$, let 
\[
M_A(\F) = \F[x]e^{\pi W_A(x)}
\]
be the module over the Weyl algebra
$\F[x_1,\ldots,x_n,\partial_1,\ldots,\partial_n]$ on which $x_i$ acts
by multiplication and
\begin{equation}\label{eq:partial_i-action}
\partial_i \cdot P(x) = \partial_iP(x) + \pi(\partial_i W_A(x))P(x)
\end{equation}
for each $1 \leq i \leq n$ and $P(x) \in
M_A(\F)$. We denote by
$\DR_A(\F)$ the \emph{de Rham complex} of $M_A(\F)$, which is by
definition the Koszul complex
\[
\textnormal{Kos}\left( M_A(\F);\,\partial_1,\partial_2, \ldots, \partial_n \right),
\]
where each $\partial_i$ acts as in equation
(\ref{eq:partial_i-action}). Given $\lambda \in \Z_{\geq 0}^n$ such
that $(\lambda A^{-T})_i \geq 0$ for all $1\leq i \leq n$, let
$\cR_A^\lambda(\F) \subseteq \cR_A(\F)$ be generated by monomials of
the form $x^\gamma y^{\lambda + \mu A^T}$ for some
$\gamma, \mu \in \Z_{\geq 0}^n$. Then
$\cR_A^\lambda(\F)\otimes\bigwedge(\F^n)$ is closed under
$d_A + d_A^\vee$ and we denote by
$\cB_A^\lambda(\F) \subseteq \cB_A(\F)$ the corresponding subcomplex.

\begin{lemma}\label{lem:B_A-decomposition}
If $A \in \BH_n(\F)$, then
  \begin{enumerate}
  \item $\cB_A(\F) \cong \bigoplus_{\lambda \in G_A} \cB_A^\lambda(\F)$, and
  \item $\cB_A^\lambda(\F)$ is quasi-isomorphic to $\cB_{A^\lambda}^{0}(\F)$.
  \end{enumerate}
\end{lemma}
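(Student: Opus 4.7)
For part (1), observe that each of $\theta_{A,i}$, $T_{A,i}^\vee$, and $\varphi_{A,i}$ fixes the $y$-exponent $\mu$, while $\psi_{A,i}^\vee$ replaces $\mu$ by $\mu + e_iA^T$ and hence preserves its class modulo $\Z^n A^T$. Thus $d_A + d_A^\vee$ respects the $G_A$-grading on $\cR_A(\F)$, and the direct-sum decomposition of the ring into the $\cR_A^\lambda(\F)$ lifts to the desired chain-complex isomorphism $\cB_A(\F) \cong \bigoplus_{\lambda \in G_A}\cB_A^\lambda(\F)$.

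For part (2), set $J = J_\lambda^\vee$ and $I = \{1,\ldots,n\} \setminus J$. The first key step is to show that $d_{A,i}$ vanishes on $\cB_A^\lambda(\F)$ for every $i \in J$. The relations in $\cR_A^\lambda(\F)$ force $\gamma_i = 0$ for $i \in J$, so the $\theta_{A,i}$ piece is zero on the nose, while for $\varphi_{A,i}$ each term $x^{\gamma + e_jA}y^\mu$ satisfies $(\gamma + e_jA)A^{-1}\mu^T = \mu_j$, so it suffices to have $\mu_j > 0$. A case-by-case check on chain and loop blocks (Proposition \ref{prop:A-classification}), using the integrality and explicit form of canonical representatives in $\Z^n \cap [0,1]^n A^T$, shows that $\lambda_j > 0$ whenever $i \in J$ and $A_{ji} > 0$, giving $\mu_j \ge \lambda_j > 0$.

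To compare with $\cB_{A^\lambda}^0(\F)$, I would run the spectral sequence of the increasing filtration $F_p = \{x^\gamma y^\mu \otimes e^K : |K \cap J| \le p\}$ of $\cB_A^\lambda(\F)$. By the previous step, the only summand of $d_A+d_A^\vee$ that strictly lowers $F_p$ is $D_J^\vee := \sum_{i \in J}d_{A,i}^\vee$, which lowers it by one, while $D_I := \sum_{i \in I}(d_{A,i}+d_{A,i}^\vee)$ preserves it. The $E^0$ page is $\cR_A^\lambda(\F) \otimes \bigwedge(\F^I) \otimes \bigwedge(\F^J)$ with differential $D_I$ acting trivially on the rightmost tensor factor; its cohomology $H(\cR_A^\lambda(\F)\otimes\bigwedge(\F^I), D_I)$ still carries $\tilde y_J$-polynomial data entangled with $\cR_{A^\lambda}^0(\F)$, and the induced $d_1$-differential from $D_J^\vee$ is designed to collapse this $\tilde y_J$-direction, leaving a complex canonically isomorphic to $\cB_{A^\lambda}^0(\F)$ in top $J$-exterior degree.

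The main obstacle is this final collapse. After the change of variables $\tilde y^\nu := y^{\nu A^T}$, each operator $T_{A,i}^\vee + \psi_{A,i}^\vee$ with $i \in J$ takes the form $\pi^{-1}\bigl((\lambda A^{-T})_i + \tilde y_i\partial_{\tilde y_i}\bigr) + \tilde y_i$, which on $\F[\tilde y_i]$ is injective with one-dimensional cokernel thanks to the non-integrality $(\lambda A^{-T})_i \in (0,1)$; the regularity of the joint sequence for $i \in J$ on the relevant submodule then follows by a triangular induction exploiting the chain/loop block structure of $A$ inherited by $A^\lambda$. Identifying the surviving top-degree quotient with $\cR_{A^\lambda}^0(\F)$ and verifying convergence of the spectral sequence at $E^2$ is then a bookkeeping exercise.
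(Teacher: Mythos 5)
Your part (1) is correct and is what the paper does (one should also record that every admissible $y$-exponent $\mu$, i.e.\ one with $(\mu A^{-T})_i\ge 0$, decomposes uniquely as $\lambda+\nu A^T$ with $\lambda$ the canonical representative and $\nu\in\Z_{\ge 0}^n$, so the $\cR_A^\lambda(\F)$ really exhaust $\cR_A(\F)$). For part (2) your overall strategy --- kill $d_{A,i}$ for $i\in J$, then collapse the extra $y$-directions --- is the paper's, but you circle around the one structural observation that makes the argument short, and the step you defer to ``bookkeeping'' is precisely the missing content. Because $\gamma_i=0$ for $i\in J$ on every surviving monomial, one has a tensor factorization
\[
\cR_A^\lambda(\F)\;\cong\;\cR_{A^\lambda}^{0}(\F)\otimes \F[\psi_{A,i}^\vee]_{i\in J}\,y^\lambda,
\]
under which $d_A+d_A^\vee$ splits as $(d_{A^\lambda}+d_{A^\lambda}^\vee)\otimes 1+1\otimes\sum_{i\in J}d_{A,i}^\vee$. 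The second factor is itself a product over $i\in J$ of two-term complexes $Ve_i\to V$ with $V=\F[\psi_{A,i}^\vee]\,y^\lambda$ and map $t^k\mapsto \pi^{-1}\bigl((\lambda A^{-T})_i+k\bigr)t^k+t^{k+1}$, each injective with cokernel $\F y^\lambda$; the K\"unneth formula then finishes the proof with no spectral sequence and nothing ``entangled'' to disentangle. In your version, exhibiting this splitting is exactly what is needed to identify the $E^2$-page with $H(\cB_{A^\lambda}^0(\F))$, and it is not done. Relatedly, the ``triangular induction exploiting the chain/loop block structure'' is a red herring: $T_{A,i}^\vee+\psi_{A,i}^\vee$ only reads and increments the coordinate $\nu_i$ of $y^{\lambda+\nu A^T}$, so the operators for distinct $i\in J$ act on independent one-variable factors and no regularity or induction argument is required.

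Two further corrections. First, the surviving cohomology sits in $J$-exterior degree \emph{zero}, not top degree: the $d_{A,i}^\vee$ are built from contractions $e_i^\vee$, so for an injective operator the cohomology of $Ve_i\to V$ is the cokernel, which lives in the target. This is not cosmetic --- the quasi-isomorphism is the inclusion $\cB_{A^\lambda}^{0}(\F)\hookrightarrow\cB_A^\lambda(\F)$, $z\mapsto z\,y^\lambda$, whose image has $J$-exterior degree zero, and the resulting bidegrees are used later (e.g.\ in the proof of Theorem~\ref{thm:TFr}). Second, no case analysis on chains and loops is needed to see that $\lambda_j>0$ whenever $i\in J$ and $A_{ji}>0$: writing $\lambda_j=\sum_k(\lambda A^{-T})_kA_{jk}$ with all terms non-negative, the single term $k=i$ already gives $\lambda_j\ge(\lambda A^{-T})_iA_{ji}>0$, hence $\mu_j\ge\lambda_j\ge 1$ for every $\mu=\lambda+\nu A^T$ and the $\varphi_{A,i}$-terms die in $\cR_A^\lambda(\F)$ as you intended.
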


\begin{proof}
  Part (1) holds because $G_A = \Z^n / \Z^n A^T$ and
  $(\lambda A^{-T})_i \geq 0$ for $1 \leq i \leq n$. For (2),
  note that since
  $y^\lambda x_i = 0$ in $\cR_A^\lambda(\F)$ for any $i$ such that
  $(J_\lambda^\vee)_i = 1$
  \[
  \cR_A^\lambda(\F) \cong \cR_{A^\lambda}^{0}(\F) 
  \otimes
  \F[ \psi_{A,i}^\vee ]_{(J_\lambda^\vee)_i=1}\,y^\lambda\, ,
  \]
  and thus
  \begin{equation}\label{eq:cB_A^lambda-decomposition}
  \cB_A^\lambda(\F) \cong \cB_{A^\lambda}^{0}(\F) \otimes {\rm Kos}(\F[ \psi_{A,i}^\vee ]_{(J_\lambda^\vee)_i=1}\,y^\lambda,(T_{A,i}^\vee + \psi_{A,i}^\vee)_{(J_\lambda)_{i}=1})\, .
  \end{equation}
  The cohomology of the second factor is isomorphic to $\F y^\lambda$,
  making the inclusion
  $\cB_{A^\lambda}^{0}(\F) \hookrightarrow \cB_A^\lambda(\F)$ a
  quasi-isomorphism.
\end{proof}

\begin{proposition}\label{lem:B_A^0-to-de-Rham}
  The complex $\cB_A^{0}(\F)$ is canonically quasi-isomorphic to $\DR_A(\F)$.
\end{proposition}

\begin{proof}
  The map
  $\Theta \colon M_A(\F) \otimes \bigwedge (\F^n) \to \cR_A^{0}(\F)
  \otimes \bigwedge(\F^n)$
  defined by $\Theta(x^\gamma e^I) = x^{\gamma + I} e^I$ gives rise to
  an embedding
  \[
  \DR_A(\F) \hookrightarrow \cB_A^{0}(\F)
  \]
  of complexes. For each $\gamma \in \Z_{\geq 0}^n$, let
  $\bigwedge_\gamma = \bigwedge
  \left(\bigoplus_{\gamma_i = 0} \F e_i \right)$. Then
  \[
  \left( \cR_A^{0}(\F) \otimes \bigwedge(\F^n), d_A^\vee \right) = \bigoplus_{\gamma,I} \cC_{\gamma,I}\, ,
  \]
  where
  \[
  \cC_{\gamma,I} = \left(  x^{\gamma+I} \F[ y^{e_i A^T} ]_{(\gamma + I)_i = 0}
  \otimes e^I\bigwedge\nolimits_{\gamma+I}\,,
  \sum_{(\gamma + I)_i =0} d_{A,i}^\vee \right)
  \]
  is a Koszul complex with cohomology $\F x^{\gamma + I} e^I$. This
  implies that
  \[
  H\left( \frac{\cR_A^{0}(\F) \otimes \bigwedge(\F^n)}{\textnormal{Im}\,\Theta}, d_A^\vee \right)=0\, ,
  \]
  and using the spectral sequence of first quadrant bicomplexes we conclude that
  \[
  H\left( \cB_A^{0}(\F) / \DR_A(\F) \right) = 0\, .
  \]
  Therefore, the inclusion $\DR_A(\F) \hookrightarrow \cB_A^{0}(\F)$ is
  a quasi-isomorphism.
\end{proof}

\begin{corollary}\label{cor:BH_n-basis}
  Let $A\in {\rm BH}_n(\mathbb F)$ and let $S(A)$ be the collection of
  monomials $x^\gamma e^I$ such that $|I|=1$ and
  $1\le \gamma_i\le a_i=A_{ii}$ for all $i=1,\ldots,n$. Then
  \begin{enumerate}
  \item $H(\mathcal B_A^0(\F))$ is isomorphic to the Milnor ring $\mathbb F[x]/dW_A(x)$.
  \item If $W_A(x)$ is a loop, then $H(\mathcal B_A^0(\F))$ is generated by monomials in $S(A)$.
  \item If $W_A(x)$ is a chain, then $H(\mathcal B_A^0(\F))$ is generated
         by those monomials in $S(A)$ of the form 
    \[
    x_1^{a_1}x_2x_3^{a_2}x_4\ldots x^{a_{2m-1}}_{2m-1} x_{2m} x^{\gamma_{2m+1}}_{2m+1}\ldots x_n^{\gamma_n}
    \]
    where $m\ge 0$ is such that $\gamma_{2m+1}<a_{2m+1}$.
\end{enumerate}
\end{corollary}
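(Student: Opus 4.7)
To prove part~(1), I would combine Proposition~\ref{lem:B_A^0-to-de-Rham} with the classical identification of the de Rham cohomology of $M_A(\F)=\F[x]e^{\pi W_A(x)}$ with the Milnor ring. The complex $\DR_A(\F)$ is the Koszul complex for the commuting operators $\partial_i + \pi\,\partial_iW_A$ on $\F[x]$, and the Berglund-H\"ubsch hypothesis forces $(\partial_1W_A,\ldots,\partial_nW_A)$ to be a regular sequence. A standard Koszul argument (or equivalently a spectral-sequence argument for the $\pi$-adic filtration, whose associated graded is the Koszul complex on $(\pi\,\partial_iW_A)$) then shows that the cohomology of $\DR_A(\F)$ is concentrated in top exterior degree $|I|=n$, where it is canonically identified with $\F[x]/dW_A(x)$.

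For parts~(2) and~(3), I would first reduce to the case of irreducible $A$, using Remark~2.1(1) together with the compatibility of $\cB_A^{0}(\F)$ and of the Milnor ring with the direct-sum decomposition of $A$, and then by Proposition~\ref{prop:A-classification} assume up to Weyl permutation that $W_A(x)$ is either a loop or a chain. In either case the plan is to exhibit an explicit monomial spanning set $\mathcal M$ of $\F[x]/dW_A(x)$ and then, via $\Theta(x^\mu e^{(1,\ldots,1)})=x^{\mu+(1,\ldots,1)}e^{(1,\ldots,1)}$, translate $\mathcal M$ into the corresponding collection of cohomology classes of $\cB_A^{0}(\F)$, which should match the subset of $S(A)$ described in the statement.

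For a loop $W_A(x)=\sum_{i=1}^{n}x_i^{a_i}x_{i+1}$ (indices modulo~$n$), the Jacobian relations read $x_{i-1}^{a_{i-1}}\equiv -a_ix_i^{a_i-1}x_{i+1}\pmod{dW_A}$ for every~$i$. These can be applied repeatedly to replace any factor $x_j^{a_j}$, and because $W_A$ is quasi-homogeneous and the Milnor ring is a finite-dimensional graded $\F$-algebra, the iterated substitution must terminate at the spanning set $\mathcal M=\{x^\mu:0\le \mu_i\le a_i-1\}$. Its $\Theta$-image is precisely the collection of monomials $x^\gamma e^{(1,\ldots,1)}$ with $1\le\gamma_i\le a_i$, establishing part~(2).

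For a chain $W_A(x)=\sum_{i=1}^{n-1}x_i^{a_i}x_{i+1}+x_n^{a_n}$ the intermediate Jacobian relations are the same, but the endpoints also give $x_1^{a_1-1}x_2\equiv 0$ and $x_{n-1}^{a_{n-1}}\equiv -a_nx_n^{a_n-1}$. Propagating the first of these through the chain via the intermediate relations yields the auxiliary identities $x_i^{a_i}x_{i+1}\equiv 0$ for $1\le i\le n-1$ and $x_n^{a_n}\equiv 0$. I would then argue by induction on $n$, tracking the largest index $2m+1\le n$ at which a given monomial has exponent strictly less than $a_{2m+1}$: the relations force the exponents at positions $1,\ldots,2m$ into the alternating pattern $(a_1,1,a_3,1,\ldots,a_{2m-1},1)$, while the exponents at positions $>2m+1$ remain free in $[0,a_i-1]$. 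The main obstacle is the combinatorial bookkeeping for this chain case, since an individual Jacobian substitution can temporarily move a monomial outside the advertised range, and a careful case-by-case analysis is needed to verify that such terms always reassemble into a sum of monomials of the specified form.
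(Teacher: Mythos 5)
Your overall strategy coincides with the paper's: part (1) is Proposition~\ref{lem:B_A^0-to-de-Rham} combined with the identification of $H(\DR_A(\F))$ with the Milnor ring, and parts (2)--(3) come down to exhibiting the standard monomial bases for chains and loops and transporting them through $\Theta$. The difference is that the paper outsources both ingredients to the literature --- it cites \cite{M} for the fact that the Milnor-ring-to-de-Rham map can be chosen to send monomials to monomials (which is exactly what justifies reading off a \emph{monomial} basis of $H(\cB_A^0(\F))$, a point worth making explicit in your part (1)), and \cite{Kre} for the chain/loop bases --- whereas you attempt to rederive the bases from the Jacobian relations. That self-contained route is legitimate in principle, and your reduction to irreducible $A$ via Proposition~\ref{prop:A-classification} is the right first step.

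However, as written your derivation of the bases has a real soft spot. For the loop, the claim that ``the iterated substitution must terminate'' does not follow from quasi-homogeneity and finite-dimensionality alone: each substitution $x_{i-1}^{a_{i-1}}\mapsto -a_i x_i^{a_i-1}x_{i+1}$ preserves the weighted degree, so the rewriting can cycle. For example, for the $n=2$ loop with $a_1=a_2=2$ one gets $x_1^2x_2\equiv 4\,x_1^2x_2$, and one concludes $x_1^2x_2\equiv 0$ only by noting that the returning coefficient is not $1$ (using the hypothesis on $\mathrm{char}\,\F$), or by an independent dimension count $\dim \F[x]/dW_A=\prod a_i$. For the chain you explicitly leave the combinatorial verification open. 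Since these monomial bases are standard and the paper simply cites them, the cleanest fix is to do the same; if you insist on a proof, you need either the coefficient/dimension argument above or a Gr\"obner-type normal-form argument, not just ``termination.''
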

\begin{proof}
  Part (1) follows from Proposition \ref{lem:B_A^0-to-de-Rham} and the
  fact (see e.g.\ \cite{M}) that there is a linear map from
  $\F[x]/dW_A(x)$ to $H(\DR_A(\mathbb F))$ sending monomials to
  monomials. Comparison with the standard monomial basis for the
  Milnor ring of chains and loops (see e.g.\ \cite{Kre}) establishes
  (2) and (3).
\end{proof}

\begin{proposition}\label{prop:orbifold-to-B_A}
  The natural inclusion of 
  $\DR_{A^\textnormal{orb}}(\F)$ into $\cB_A(\F)$ is a
  quasi-isomorphism.
\end{proposition}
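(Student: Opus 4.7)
The plan is to assemble this result from the pieces already developed in Section 6, combining the sector decomposition of $\cB_A(\F)$ with the two earlier quasi-isomorphisms. First, by Lemma \ref{lem:B_A-decomposition}(1) I would decompose
\[
\cB_A(\F) = \bigoplus_{\lambda \in G_A} \cB_A^\lambda(\F)\,,
\]
reducing the problem to exhibiting, for each $\lambda \in G_A$, a quasi-isomorphism $\DR_{A^\lambda}(\F) \hookrightarrow \cB_A^\lambda(\F)$ that can then be summed to yield the map from $\DR_{A^{\orb}}(\F) = \bigoplus_\lambda \DR_{A^\lambda}(\F)$ (which matches the block-diagonal decomposition $A^{\orb} = \bigoplus_\lambda A^\lambda$).

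For each sector $\lambda$, composing the quasi-isomorphism $\cB_{A^\lambda}^{0}(\F) \hookrightarrow \cB_A^\lambda(\F)$ from Lemma \ref{lem:B_A-decomposition}(2) with the quasi-isomorphism $\DR_{A^\lambda}(\F) \hookrightarrow \cB_{A^\lambda}^{0}(\F)$ from Proposition \ref{lem:B_A^0-to-de-Rham} produces the required map. Concretely, I would extend the embedding $\Theta(x^\gamma e^I) = x^{\gamma+I} e^I$ of Proposition \ref{lem:B_A^0-to-de-Rham} to $\Theta_\lambda(x^\gamma e^I) = x^{\gamma+I} y^\lambda e^I$, where $y^\lambda$ (with $\lambda$ chosen as its canonical representative from Section 2) places the image into $\cR_A^\lambda(\F)$. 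Summing the $\Theta_\lambda$ over $\lambda \in G_A$ and using the decomposition of $\cB_A(\F)$ yields the natural inclusion appearing in the statement.

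The main step to verify is that this sector-by-sector construction really is the ``natural'' inclusion and that it is compatible with the differential $d_A + d_A^\vee$ on $\cB_A(\F)$. This amounts to checking, for each $\lambda$, that $\Theta_\lambda$ intertwines the derivations $\partial_i$ on $\DR_{A^\lambda}(\F)$ with the restriction of $d_A + d_A^\vee$ to $\cB_A^\lambda(\F)$. The bookkeeping reduces to observing, as in the proof of Lemma \ref{lem:B_A-decomposition}(2) via the factorization (\ref{eq:cB_A^lambda-decomposition}), that the operators $T_{A,i}^\vee + \psi_{A,i}^\vee$ indexed by those $i$ with $(J_\lambda^\vee)_i = 1$ act acyclically along the $y^\lambda$-fiber, so that only the $\DR_{A^\lambda}$-piece contributes to the cohomology of $\cB_A^\lambda(\F)$. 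Everything else is a direct adaptation of the spectral-sequence argument already carried out in Proposition \ref{lem:B_A^0-to-de-Rham}.
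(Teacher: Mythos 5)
Your proposal is correct and follows exactly the paper's own (one-line) argument: decompose $\cB_A(\F)$ into the sectors $\cB_A^\lambda(\F)$ via Lemma~\ref{lem:B_A-decomposition} and then apply Proposition~\ref{lem:B_A^0-to-de-Rham} sector by sector. Your additional discussion of the explicit maps $\Theta_\lambda$ and their compatibility with the differential is a reasonable elaboration of details the paper leaves implicit.
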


\begin{proof}
  The proposition follows from Lemma~\ref{lem:B_A-decomposition} and Proposition
  \ref{lem:B_A^0-to-de-Rham}.
\end{proof}

\section{Unprojected Duality}

Given $A \in \BH_n(\F)$ and $\pi \in \F^*$, let
$\psi_{A,i}, T_{A,i} \in \End_\F(\cR_A(\F))$ for $1 \leq i \leq n$ be
defined by
\[
\psi_{A,i}(x^\gamma y^\lambda) = x^{\gamma + e_i A} y^\lambda 
\]
and
\[
T_{A,i} (x^\gamma y^\lambda) = \pi^{-1} (\gamma A^{-1})_i\,x^\gamma y^\lambda,
\]
so that $d_A=\sum_{i=1}^n \hat d_{A,i}$, where
\[
\hat d_{A,i} = (T_{A,i} + \psi_{A,i}) E_{A,i}\, .
\]

\begin{remark}
Since we are using logarithmic differentials, $e_i$ can be naturally interpreted
as $dx_i / x_i$. One motivation for the change of basis to $E_{A,i}$ is the Shioda map $x^\gamma \mapsto z^{\gamma A^{-1} \det(A)}$ which sends $W_A$ to 
$z^{e_1 \det(A)} + \ldots + z^{e_n \det(A)}$. If we interpret $E_{A,i}$ as 
$dz_i / z_i$, its definition is simply the chain rule. 
\end{remark}

Let $\cS_A(\F) \subseteq \cR_A(\F)$ be generated by monomials
$x^\gamma y^\lambda$ such that $(\gamma A^{-1})_i \geq 0$ for all
$1 \leq i \leq n$. Then $\cS_A(\F) \otimes \bigwedge(\F^n)$ is closed
under $d_A + d_A^\vee$. Let $\cC_A(\F) \subseteq \cB_A(\F)$ denote the
corresponding subcomplex.

\begin{lemma}\label{lem:restriction-quasi-isomorphism}
The inclusion $\cC_A(\F) \hookrightarrow \cB_A(\F)$ is a quasi-isomorphism.
\end{lemma}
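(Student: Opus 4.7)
The plan is to show that the quotient complex $\cB_A(\F)/\cC_A(\F)$ is acyclic, which is equivalent to the inclusion being a quasi-isomorphism. On the quotient, consider the decreasing filtration
\[
F^N = \mathrm{span}\,\bigl\{x^\gamma y^\lambda e^I \in \cB_A(\F)/\cC_A(\F) : |\gamma|+|\lambda|\ge N\bigr\}.
\]
Because every row and every column of $A \in \GL_n(\Z_{\ge 0})$ has strictly positive sum, the ``shift'' operators $\varphi_{A,i}$ and $\psi_{A,i}^\vee$ strictly raise $|\gamma|$ and $|\lambda|$ respectively, while $\theta_{A,i}$ and $T_{A,i}^\vee$ preserve $(|\gamma|,|\lambda|)$; hence $d_A+d_A^\vee$ respects $F^\bullet$. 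On the associated graded only the scalar parts survive, and using $E_{A,j}=\pi\sum_k A_{jk}e_k$ together with $(\gamma A^{-1})A=\gamma$, the induced differential on $E_0$ simplifies to
\[
D_0 \;=\; \sum_{k=1}^n \gamma_k\, e_k \;+\; \pi^{-1}\sum_{i=1}^n (\lambda A^{-T})_i\, e_i^\vee.
\]
This operator preserves $(\gamma,\lambda)$ exactly, so $E_0$ decomposes as a direct sum indexed by $(\gamma,\lambda)$ of fixed-weight complexes on $\bigwedge(\F^n)$.

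For each such slice, the defining relation $\gamma A^{-1}\lambda^T = 0$ together with the non-negativity of $\gamma_k$ and $(\lambda A^{-T})_k$ forces $\gamma_k(\lambda A^{-T})_k=0$ for every $k$, so the index sets $S_\gamma = \{k:\gamma_k>0\}$ and $S_\lambda = \{k:(\lambda A^{-T})_k>0\}$ are disjoint. Under the Clifford tensor decomposition $\bigwedge(\F^n) \cong \bigotimes_{k=1}^n \bigwedge(\F e_k)$, the operator $D_0$ decouples as a sum of pairwise graded-commuting single-index operators: Koszul creation by $\gamma_k e_k$ on factors indexed by $S_\gamma$, Koszul annihilation by $\pi^{-1}(\lambda A^{-T})_k e_k^\vee$ on factors indexed by $S_\lambda$, and zero on the remaining factors. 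Each nonzero single-factor two-term Koszul complex is acyclic, so by K\"unneth $H(\bigwedge(\F^n),D_0) = 0$ as soon as $S_\gamma\cup S_\lambda \neq \emptyset$. In the quotient $\cB_A(\F)/\cC_A(\F)$, every monomial has $(\gamma A^{-1})_i < 0$ for some $i$, which forces $\gamma\neq 0$ (otherwise $\gamma A^{-1} = 0$); hence $S_\gamma\neq \emptyset$ and $E_1 = 0$ everywhere.

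Since $F^\bullet$ is exhaustive ($F^0=\cB_A(\F)/\cC_A(\F)$) and Hausdorff ($\bigcap_N F^N = 0$, as each element has finite monomial support), the vanishing of $E_1$ forces $H(\cB_A(\F)/\cC_A(\F))=0$ and completes the proof. The main technical obstacle is ensuring convergence of the spectral sequence in this unbounded-filtration setting; I expect this to be handled via the standard Hausdorff--exhaustive convergence theorem for decreasing filtrations, or equivalently by using the explicit $D_0$-contracting homotopy $h = \bigl(\sum_k\gamma_k^2\bigr)^{-1}\sum_k \gamma_k\, e_k^\vee$ (well-defined because $\gamma\neq 0$) to inductively remove the leading-filtration component of any given cocycle.
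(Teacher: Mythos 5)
Your overall strategy --- filter $\cB_A(\F)/\cC_A(\F)$ by total degree so that only the scalar parts $\theta_{A,i}$, $T_{A,i}^\vee$ survive on the associated graded, and then observe that the resulting fibrewise Koszul operator $D_0=\sum_k\gamma_k e_k+\pi^{-1}\sum_k(\lambda A^{-T})_k e_k^\vee$ squares to zero by the relation $\gamma A^{-1}\lambda^T=0$ and is acyclic whenever $\gamma\neq 0$ --- is genuinely different from the paper's, and the algebra you do on $E_0$ (disjointness of $S_\gamma$ and $S_\lambda$, the tensor decomposition) is correct. But the step from $E_1=0$ to $H(\cB_A(\F)/\cC_A(\F))=0$ is a real gap, and you have correctly identified where it is without closing it. Your filtration is decreasing, exhaustive and Hausdorff but \emph{unbounded}, and for decreasing filtrations those hypotheses do not suffice: one needs completeness of the complex (Boardman's complete convergence theorem), and the polynomial ring $\cR_A(\F)$ is not complete for the degree filtration. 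Concretely, your proposed homotopy gives the perturbative inverse $\sum_{k\ge 0}(-hD_+)^k h$, where $D_+$ is the degree-raising part of the differential; each application of $hD_+$ strictly raises degree, so the series need not terminate and in general only defines an element of the degree completion. Termination is not automatic here: the operators $\psi_{A,i}^\vee$ act nontrivially (and indefinitely) on any monomial with $\gamma_i=0$, producing nonzero monomials $x^\gamma y^{\lambda+ke_iA^T}$ in the quotient for every $k$, since membership in $\cS_A(\F)$ is a condition on $\gamma$ alone. This is exactly the finiteness the paper engineers into its argument: it filters by the sign pattern of $\gamma A^{-1}$ (only $n+1$ steps), and on each graded piece the relevant shift $\psi_{A,i}$ raises $(\gamma A^{-1})_i$ by $1$ at each application, hence is locally nilpotent modulo the next filtration step, so the analogous geometric series is finite. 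Your argument needs a substitute for this local nilpotence.

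There is a second gap in positive characteristic. The paper allows $\textnormal{char}\,\F=p>\det A$, and then $\gamma_k>0$ as an integer does not imply $\gamma_k\neq 0$ in $\F$. For example, for $W_A=x_1^2x_2+x_2^3$ take $\gamma=(p,0)$ and $\lambda=0$: then $\gamma A^{-1}=(p/2,-p/6)$ has a negative entry, so $x_1^p$ survives in the quotient, yet on that slice $D_0=p\,e_1=0$ and the slice has nonzero cohomology, so your $E_1$ does not vanish. (The same problem afflicts the homotopy coefficient $(\sum_k\gamma_k^2)^{-1}$.) The paper deals with this by proving the characteristic-zero case first and then transferring to characteristic $p$ via the long exact sequence of multiplication by $p$ on a lift over $\K$; you would need to graft a similar reduction onto your argument, but only after the convergence issue in characteristic zero has been resolved.
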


\begin{proof}
  Consider the filtration
  \[
  \cS_A(\F) \subseteq F^n \subseteq F^{n-1} \subseteq \ldots \subseteq F^1 = \cR_A(\F)\, ,
  \]
  where $F^i$ is spanned by monomials $x^\gamma y^\lambda$ such that
  $(\gamma A^{-1})_j \geq 0$ for all $j < i$. In particular,
  $F^i / F^{i+1}$ is canonically identified with the space of
  monomials $x^\gamma y^\lambda$ such that $(\gamma A^{-1})_i < 0$.
  Consider the filtration
  $G^\bullet(\F) = F^\bullet \otimes \bigwedge(\F^n)$ of
  $\cR_A(\F) \otimes \bigwedge(\F^n)$. Notice that
  \[
  \left( \frac{G^i(\F))}{G^{i+1}(\F)};\,d_A,d_A^\vee \right)
  \]
  is a bicomplex with respect to the
  $\textnormal{Spec}(Q_A) \times \textnormal{Spec}(Q_A^\vee)$
  bigrading, while
  \[
  \left( \frac{G^i(\F)}{G^{i+1}(\F)};\,\hat d_{A,i}, d_A - \hat d_{A,i} \right)
  \]
  is a bicomplex with respect to the
  \[
  \textnormal{Spec}(E_{A,i}\, E_{A,i}^\vee) \times \textnormal( \ext - E_{A,i}\,E_{A,i}^\vee)
  \]
  bigrading. Therefore, in order to prove that
  \begin{equation}\label{eq:G^i-quot-coh-d+d}
  H\left( \frac{G^i(\F)}{G^{i+1}(\F)},d_A + d_A^\vee \right) = 0\, ,
  \end{equation}
  it is sufficient to show that
  \begin{equation}\label{eq:G^i-quot-coh-hat-d}
    H \left( \frac{G^i(\F)}{G^{i+1}(\F)},\hat d_{A,i} \right)  = 0\, .
  \end{equation}
  If this is the case, the result then follows from the spectral
  sequence of the filtered complex
  $\left( \cB_A(\F), G^\bullet(\F)\right)$. To prove
  (\ref{eq:G^i-quot-coh-hat-d}), we distinguish two cases:

  First, suppose that $\textnormal{char}\,\F = 0$. In this case,
  $T_{A,i}$ acts by nonzero eigenvalues on $F^i / F^{i+1}$.  By
  looking at the filtration of $F^i / F^{i+1}$ by
  $\textnormal{Spec}(T_{A,i})$, we conclude that
  $T_{A,i} + \psi_{A,i}$ is injective. Therefore,
  $H \left( \frac{G^i(\F)}{G^{i+1}(\F)},\hat d_{A,i} \right)$
  is concentrated in top
  ${\rm Spec}(E_{A,i}E_{A,i}^\vee)$-degree and isomorphic to the quotient
  \[
  F^i \big/ \left( F^{i+1} + \textnormal{Im}(T_{A,i} + \psi_{A,i}) \right).
  \]
  On the other hand, for each $f \in F^i$ there exists $N \in \N$
  such that $\psi_{A,i}^n f \in F^{i+1}$, which implies
  (\ref{eq:G^i-quot-coh-hat-d}). See Figure~\ref{fig: Filtration by Spec T_AT for W_AT}
  for an illustration.

  Second, suppose that $\textnormal{char}\,\F = p > \det A$. Let
  $\K$ be a field such that $\textnormal{char}\,\K = 0$ and
  $\F = \K/p\K$. Consider the short exact sequence of complexes
  \[
  \begin{CD}
    0\! 
    \!@>>>
    \!\left( \frac{G^i(\K)}{G^{i+1}(\K)}, \hat d_{A,i} \right)
    @>p>>
    \!\left( \frac{G^i(\K)}{G^{i+1}(\K)}, \hat d_{A,i} \right)
    \!@>>>
    \!\left( \frac{G^i(\K)}{G^{i+1}(\K)}, \hat d_{A,i} \right) 
    \!@>>>
    \!0\, .
  \end{CD}
  \]
  Taking the long exact sequence and using the characteristic $0$ case
  established above yields (\ref{eq:G^i-quot-coh-hat-d}).
\end{proof}

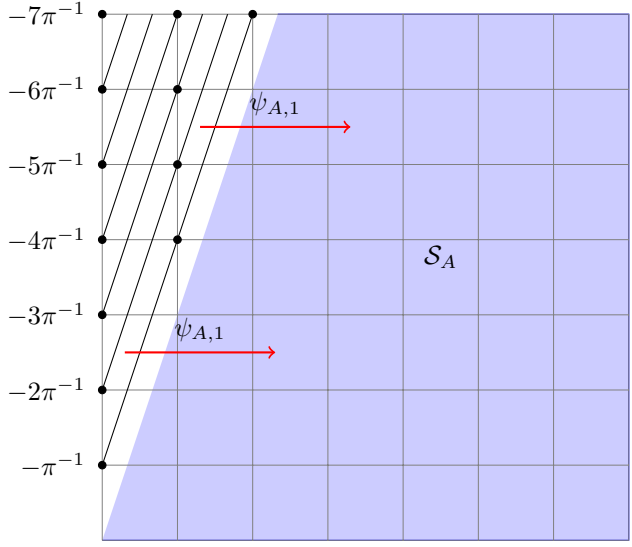
\begin{figure}
\begin{tikzpicture}
\draw [blue,fill=blue!20] (0,0) -- (7,0) -- (7,7) -- (2.3333333,7);
\draw[help lines] (0,0) grid (7,7);
\draw (0,1) -- (2,7);
\draw[fill] (0,1) circle [radius=0.05]; \node [left] at (-0.1,1) {$- \pi^{-1}$};
\draw[fill] (1,4) circle [radius=0.05];
\draw[fill] (2,7) circle [radius=0.05];
\draw (0,2) -- (1.6666666,7);
\draw[fill] (0,2) circle [radius=0.05]; \node [left] at (-0.1,2) {$-2\pi^{-1}$};
\draw[fill] (1,5) circle [radius=0.05];
\draw (0,3) -- (1.3333333,7);
\draw[fill] (0,3) circle [radius=0.05]; \node [left] at (-0.1,3) {$-3\pi^{-1}$};
\draw[fill] (1,6) circle [radius=0.05];
\draw (0,4) -- (1,7);
\draw[fill] (0,4) circle [radius=0.05]; \node [left] at (-0.1,4) {$-4\pi^{-1}$};
\draw[fill] (1,7) circle [radius=0.05];
\draw (0,5) -- (0.6666666,7);
\draw[fill] (0,5) circle [radius=0.05]; \node [left] at (-0.1,5) {$-5\pi^{-1}$};
\draw (0,6) -- (0.3333333,7); 
\draw[fill] (0,6) circle [radius=0.05]; \node [left] at (-0.1,6) {$-6\pi^{-1}$};
\draw[fill] (0,7) circle [radius=0.05]; \node [left] at (-0.1,7) {$-7\pi^{-1}$};
\draw [->, red, thick] (0.3,2.5) -- (2.3,2.5); \node[above] at (1.3,2.5) {$\psi_{A,1}$};
\draw [->, red, thick] (1.3,5.5) -- (3.3,5.5); \node[above] at (2.3,5.5) {$\psi_{A,1}$};
\node[above] at (4.5,3.5) {$\cS_A$};
\end{tikzpicture}
\caption{Eigenspaces of $T_{A,2}$ in $F^2$ for $W_A = x_1^2 + x_1 x_2^3$, with eigenvalues designated along the left. Each point represents $\gamma$ in $x^\gamma$.}
\label{fig: Filtration by Spec T_AT for W_AT}
\end{figure}

\begin{proposition}\label{prop:delta-A}
  Let $D^A\colon \cS_A(\F) \to \cS_{A^T}(\F)$ be defined by
  $D^A(x^\gamma y^\lambda) = x^\lambda y^\gamma$. Then,
  \[
  \Delta^A = D^A \otimes *^A \colon \cS_A(\F) \otimes \bigwedge(\F^n)
  \to \cS_{A^T}(\F) \otimes \bigwedge(\F^n)
  \]
  induces an isomorphism of complexes
  \[
  \begin{CD}
    \cC_A(\F) @>\cong>> \cC_{A^T}(\F)\, .
  \end{CD}
  \]
\end{proposition}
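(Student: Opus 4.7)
The plan is to show that $\Delta^A$ intertwines the total differential $d_A + d_A^\vee$ on $\cC_A(\F)$ with $d_{A^T} + d_{A^T}^\vee$ on $\cC_{A^T}(\F)$, swapping the ``$\hat d$-part'' with the ``$d^\vee$-part,'' and then verify invertibility. The key observation is that $D^A$ is designed precisely so as to exchange the multiplicative/logarithmic operators on the $x$-variables with those on the $y$-variables (at the cost of transposing $A$), while $*^A$, by the previous lemma, does the analogous exchange between $E_{A,\bullet}$ and $e_\bullet^\vee$ on the exterior side.

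First I would verify that $\Delta^A$ really maps $\cC_A(\F)$ into $\cC_{A^T}(\F)$. For $x^\gamma y^\lambda \in \cS_A(\F)$ the conditions $(\gamma A^{-1})_i\ge 0$ and $(\lambda A^{-T})_i\ge 0$ become, for $x^\lambda y^\gamma$ in $\cS_{A^T}(\F)$, the conditions $(\lambda (A^T)^{-1})_i\ge 0$ and $(\gamma (A^T)^{-T})_i\ge 0$, which coincide; and the relation $\gamma A^{-1}\lambda^T>0$ is invariant under taking transposes (the quantity is a scalar), so the defining ideal of $\cR_A$ matches that of $\cR_{A^T}$.

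Next I would verify four ``conjugation by $D^A$'' identities directly from the definitions:
\begin{align*}
D^A T_{A,i} &= T_{A^T,i}^\vee D^A, & D^A \psi_{A,i} &= \psi_{A^T,i}^\vee D^A,\\
D^A T_{A,i}^\vee &= T_{A^T,i} D^A, & D^A \psi_{A,i}^\vee &= \psi_{A^T,i} D^A.
\end{align*}
Each is a one-line check on the monomial basis using $(\gamma A^{-1})_i = (\gamma (A^T)^{-T})_i$ and the fact that $e_iA$ (as a row) equals $e_i(A^T)^T$. Combined with the lemma, $*^A E_{A,i}=e_i^\vee *^A$ and $*^A e_i^\vee = E_{A^T,i}*^A$, a direct computation then gives
\begin{align*}
\Delta^A \hat d_{A,i} &= \bigl(D^A(T_{A,i}+\psi_{A,i})\bigr)\otimes (*^A E_{A,i}) = (T_{A^T,i}^\vee+\psi_{A^T,i}^\vee)e_i^\vee\,\Delta^A = d_{A^T,i}^\vee\Delta^A,\\
\Delta^A d_{A,i}^\vee &= \bigl(D^A(T_{A,i}^\vee+\psi_{A,i}^\vee)\bigr)\otimes (*^A e_i^\vee) = (T_{A^T,i}+\psi_{A^T,i})E_{A^T,i}\,\Delta^A = \hat d_{A^T,i}\Delta^A,
\end{align*}
so that $\Delta^A(d_A+d_A^\vee)=(d_{A^T}+d_{A^T}^\vee)\Delta^A$.

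Finally, invertibility: $D^{A^T}D^A$ is manifestly the identity on monomials, while $*^A$ is a linear isomorphism because $*^A(1)=E_{A^T,1}\cdots E_{A^T,n}=\pi^n\det(A)\,e_1\cdots e_n$ is a nonzero volume element and $*^A$ intertwines the irreducible Clifford action with itself (so by Schur's lemma $*^{A^T}*^A$ is a nonzero scalar). Hence $\Delta^A$ is a bijection. The bulk of the work is just the bookkeeping in the middle step; the main place one must be careful is to keep straight which operator lives on $\cS_A$ versus $\cS_{A^T}$ and that the conjugation by $D^A$ turns an ``$A$-operator'' indexed by $i$ into an ``$A^T$-operator'' of dual type indexed by the same $i$, which is precisely what is needed to pair with the exterior switch performed by $*^A$.
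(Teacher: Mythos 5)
Your proof is correct and follows essentially the same route as the paper's: reduce to showing $\Delta^A$ is a chain map via the conjugation identities $D^A T_{A,i}=T_{A^T,i}^\vee D^A$ and $D^A\psi_{A,i}=\psi_{A^T,i}^\vee D^A$ (and their duals) combined with the intertwining relations for $*^A$, then conclude invertibility from $D^{A^T}D^A=\Id$ and $*^A\in\GL\left(\bigwedge(\F^n)\right)$. Your additional checks (well-definedness of $D^A$ on $\cS_A(\F)$ and the Schur-lemma argument for the invertibility of $*^A$) are fine details that the paper takes for granted.
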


\begin{proof}
  Since $D^{A^T} D^{A} = \Id$ and
  $*^A\in{\rm GL}\left(\bigwedge(\F^n)\right)$, we only need to prove
  that $\Delta^A$ is a chain map. Using
  \[
  D^A \psi_{A,i} = \psi_{A^T,i}^\vee D^A
  \quad\text{and}\quad
  D^A T_{A,i} = T_{A^T,i}^\vee D^A\, ,
  \]
  it follows that
  \[
  \Delta^A\left( (T_{A,i} + \psi_{A,i} ) E_{A,i} \right) =
  \left( (T_{A^T,i}^\vee + \psi_{A^T,i}^\vee) e_i^\vee \right) \Delta^A
  \]
  and
  \[
  \Delta^A \left( (T_{A,i}^\vee + \psi_{A,i}^\vee)e_i^\vee \right) =
  \left( (T_{A^T,i} + \psi_{A^T,i}) E_{A^T,i} \right) \Delta^A.
  \]
\end{proof}

\begin{theorem}[Unprojected Berglund-H\"ubsch Duality]
  The complexes $\DR_{A^\textnormal{orb}}(\F)$ and
  $\DR_{(A^T)^\textnormal{orb}}(\F)$ are canonically quasi-isomorphic.
\end{theorem}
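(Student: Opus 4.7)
The plan is to assemble the theorem from a zig-zag of quasi-isomorphisms already established in Sections 6 and 7. More precisely, I would build the diagram
\[
\DR_{A^{\textnormal{orb}}}(\F)\hookrightarrow \cB_A(\F)\hookleftarrow \cC_A(\F)\xrightarrow{\Delta^A}\cC_{A^T}(\F)\hookrightarrow \cB_{A^T}(\F)\hookleftarrow \DR_{(A^T)^{\textnormal{orb}}}(\F)
\]
and argue that every arrow is a quasi-isomorphism, so that the composition (as a roof in the derived category) gives a canonical quasi-isomorphism between the two outer terms.

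First I would invoke Proposition \ref{prop:orbifold-to-B_A} to show that the left-most inclusion and the symmetric right-most inclusion are quasi-isomorphisms; this is the combination of the decomposition in Lemma \ref{lem:B_A-decomposition} and the canonical chain-level identification of $\cB_A^0(\F)$ with $\DR_A(\F)$ from Proposition \ref{lem:B_A^0-to-de-Rham}, applied once to $A$ and once to $A^T$. Next I would appeal to Lemma \ref{lem:restriction-quasi-isomorphism} for both $A$ and $A^T$, which says that the subcomplexes $\cC_A(\F)\subseteq \cB_A(\F)$ and $\cC_{A^T}(\F)\subseteq \cB_{A^T}(\F)$ are quasi-isomorphic to the full complexes. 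Finally I would apply Proposition \ref{prop:delta-A}, which gives the \emph{strict} isomorphism $\Delta^A=D^A\otimes *^A\colon \cC_A(\F)\xrightarrow{\cong}\cC_{A^T}(\F)$ realizing the transposition $A\leftrightarrow A^T$ at the chain level.

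There is essentially no obstacle beyond carefully chaining these three results: all the analytic content has been absorbed into Lemma \ref{lem:restriction-quasi-isomorphism} (the only step that required a nontrivial spectral sequence argument together with the characteristic zero vs.\ positive characteristic dichotomy) and into the computation of Proposition \ref{prop:delta-A} (which hinges on the Hodge-star-like intertwining identities $*^A E_{A,i}=e_i^\vee *^A$ and $*^A E_{A,i}^\vee=e_i *^A$). The only point requiring mild care is the word \emph{canonical}: I would make explicit that the roof does not depend on arbitrary choices, since each of the maps is either a tautological inclusion or the fixed operator $\Delta^A$, so the induced isomorphism on cohomology is natural in $A$.
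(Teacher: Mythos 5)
Your proposal is correct and follows exactly the paper's argument: the paper's own proof simply cites Proposition~\ref{prop:orbifold-to-B_A}, Lemma~\ref{lem:restriction-quasi-isomorphism}, and Proposition~\ref{prop:delta-A}, which is precisely the zig-zag of quasi-isomorphisms you assemble. Your version just spells out the roof explicitly, which is a faithful elaboration rather than a different route.
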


\begin{proof}
  The theorem follows from Proposition~\ref{prop:delta-A},
  Lemma~\ref{lem:restriction-quasi-isomorphism}, and
  Proposition~\ref{prop:orbifold-to-B_A}.
\end{proof}

\section{Overconvergent Power Series}

Let $p \in \Z_{\geq 0}$ be a prime, $\K = \C_p$, $\F = \K / p\K$,
$A \in \BH_n(\F)$, and $\pi \in \K$ such that $\pi^{p-1} = -p$. Let
$\widetilde{\cR_A}^\dagger(\K)$ be the ring of
\emph{overconvergent power series}
\[
\sum_{\gamma, \lambda \in \Z_{\geq 0}^n} a_{\gamma, \lambda}\,x^\gamma y^\lambda
\]
such that $(\lambda A^{-T})_i \geq 0$ for all $1 \leq i < n$, and
such that there exists $M > 0$ for which 
\begin{equation}\label{overconvergent}
\ord_p( a_{\gamma, \lambda} ) \geq M ( |\gamma| + |\lambda| )
\end{equation}
for all but finitely many $\gamma, \lambda$.  Similarly, define
$\cR_A^\dagger(\K)$, $\cS_A^\dagger(\K)$, $\cB_A^\dagger(\K)$, and
$\cC_A^\dagger(\K)$ as before, by replacing polynomials with
overconvergent power series.

\begin{lemma}\,
  \begin{itemize}
  \item \label{lem:overc-poly} The inclusions
    \[
    \begin{CD}
      \cB_A(\K) @>>> \cB_A^\dagger(\K) \\
      @AAA            @AAA \\
      \cC_A(\K) @>>> \cC_A^\dagger(\K)
    \end{CD}
    \]
    are quasi-isomorphisms.
  \item \label{lem:delta-A-overc} $\Delta^A$ extends to an isomorphism of complexes
    \[
    \cC_A^\dagger(\K) \xrightarrow{\cong} \cC_{A^T}^\dagger(\K).
    \]
  \end{itemize}
\end{lemma}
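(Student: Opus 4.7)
The plan is to treat the two bullet points separately. For the square in~(1), the left vertical inclusion $\cC_A(\K)\hookrightarrow \cB_A(\K)$ is already a quasi-isomorphism by Lemma~\ref{lem:restriction-quasi-isomorphism} applied over $\K$, so by two-out-of-three it suffices to prove two of the remaining three maps. My choice is to establish the right vertical $\cC_A^\dagger(\K)\hookrightarrow \cB_A^\dagger(\K)$ together with the top horizontal $\cB_A(\K)\hookrightarrow \cB_A^\dagger(\K)$; the bottom horizontal then follows formally from commutativity.

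For the right vertical, the plan is to repeat the filtration argument of Lemma~\ref{lem:restriction-quasi-isomorphism} in the overconvergent category. The filtration $F^\bullet$ of $\cR_A^\dagger(\K)$ by the conditions $(\gamma A^{-1})_j\geq 0$ for $j<i$ makes sense on overconvergent series, and the same bicomplex argument reduces the problem to showing that $\hat d_{A,i}=(T_{A,i}+\psi_{A,i})E_{A,i}$ is acyclic on each graded piece $F^i/F^{i+1}$. The key observation is that every $\psi_{A,i}$-orbit in $F^i/F^{i+1}$ is finite: starting from $x^\gamma y^\lambda$ with $(\gamma A^{-1})_i=-c<0$, the monomial $x^{\gamma+ke_iA}y^\lambda$ enters $F^{i+1}$ as soon as $k\geq\lceil c\rceil$, so inverting $T_{A,i}+\psi_{A,i}$ is a finite linear problem orbit-by-orbit. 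I expect the overconvergence estimate to be the main obstacle: one must check that the explicit formula for the inverse, involving products of factors $\pi/(c-k)$ and shifts by $e_iA$, produces coefficients satisfying~(\ref{overconvergent}). Here the hypothesis $\pi^{p-1}=-p$, which gives $\ord_p(\pi)=1/(p-1)>0$, is precisely what compensates the $p$-adic denominators introduced by the eigenvalues $\pi^{-1}(k-c)$ and guarantees the required linear growth.

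For the top horizontal, I would use the decomposition $\cB_A^\dagger(\K)\cong\bigoplus_{\lambda\in G_A}\cB_A^{\lambda,\dagger}(\K)$, valid because $G_A$ is finite and~(\ref{overconvergent}) respects the coset decomposition of $\lambda$ modulo $\Z^n A^T$. The overconvergent analog of Lemma~\ref{lem:B_A-decomposition}(2) reduces each summand to $\cB_{A^\lambda}^{0,\dagger}(\K)$, and the overconvergent analog of Proposition~\ref{lem:B_A^0-to-de-Rham} identifies the latter with $\DR_{A^\lambda}^\dagger(\K)$; the quasi-isomorphism $\DR_{A^\lambda}(\K)\simeq \DR_{A^\lambda}^\dagger(\K)$, both computing the Milnor ring of $W_{A^\lambda}$, is a standard Dwork-theoretic fact. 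Part~(2) is then almost formal: the bound~(\ref{overconvergent}) is symmetric in $|\gamma|$ and $|\lambda|$, so $D^A$ exchanges $\cS_A^\dagger(\K)$ with $\cS_{A^T}^\dagger(\K)$, and the chain-map identities verified in Proposition~\ref{prop:delta-A} are purely algebraic and extend without change.
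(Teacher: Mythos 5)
Your architecture (two-out-of-three on the square, plus the observation that the left vertical arrow is just Lemma~\ref{lem:restriction-quasi-isomorphism} over $\K$ and that part (2) is formal once $D^A$ is seen to preserve \eqref{overconvergent}) is sound, and the last paragraph on part (2) matches the paper. But the route you choose for the two remaining arrows is genuinely different from the paper's, and it is precisely there that your argument has a gap: everything you need is deferred to unproved $p$-adic estimates. For the right vertical arrow you must invert $T_{A,i}+\psi_{A,i}$ on overconvergent series; the coefficients of the inverse are $(-\pi)^{k+1}/\prod_{j\le k}(j-c)$ with $c=-(\gamma A^{-1})_i$, and the compensation by $\ord_p(\pi)=1/(p-1)$ is not ``precise'': $\ord_p\bigl(\prod_{j\le k}(j-c)\bigr)$ can exceed $k/(p-1)$ by a logarithmic defect, so one must actually verify that \eqref{overconvergent} survives (with a smaller $M$). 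Worse, for the top horizontal arrow you invoke ``overconvergent analogs'' of Lemma~\ref{lem:B_A-decomposition}(2) and Proposition~\ref{lem:B_A^0-to-de-Rham} as if they were free, but their proofs rest on the acyclicity of Koszul complexes built from $T_{A,i}^\vee+\psi_{A,i}^\vee$ and $d_A^\vee$, whose homotopy inverses require exactly the same kind of estimate. So the hard analytic content of the lemma is asserted twice but never carried out.

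The paper avoids all of this. It observes that $\cR_A(\K)$ and $\cR_A^\dagger(\K)$ have the same reduction modulo $p$, namely $\cR_A(\F)$, and compares the two rows of
\[
0 \to \cB_A^{(\dagger)}(\K) \xrightarrow{\;p\;} \cB_A^{(\dagger)}(\K) \to \cB_A(\F) \to 0
\]
via their long exact sequences, exactly as in Monsky's Theorem 8.5; the fact that $\cB_A(\F)$ and $\cC_A(\F)$ split into subcomplexes with cohomology concentrated in top degree is what makes the five-lemma argument close. This proves both horizontal arrows at once with no convergence estimates, and the vertical arrows then follow from Lemma~\ref{lem:restriction-quasi-isomorphism}. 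If you want to keep your direct approach, you must either supply the Pochhammer-type valuation bounds for every homotopy inverse you use, or recognize that the ``standard Dwork-theoretic fact'' $\DR_{A^\lambda}(\K)\simeq\DR_{A^\lambda}^\dagger(\K)$ you cite is itself Monsky's theorem --- at which point it is cleaner to apply that reduction-mod-$p$ mechanism directly to $\cB_A$ and $\cC_A$, as the paper does.
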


\begin{proof}
  To prove (1), let
  $f = \sum a_{\gamma, \lambda}\, x^\gamma y^\lambda$ be an
  overconvergent power series. Since
  $\ord_p(a_{\gamma,\lambda}) \geq 1$ for all but finitely many
  $\gamma, \lambda$, by reducing modulo $p$ we obtain a polynomial
  $\overline{f} \in \F[x_1,\ldots,x_n,y_1,\ldots,y_n]$. Therefore,
  $\cR_A(\K)$ and $\cR_A^\dagger(\K)$ both reduce modulo $p$ to
  $\cR_A(\F)$. Since $\cB_A(\F)$ and $\cC_A(\F)$ decompose into
  subcomplexes with cohomology concentrated in top degree, the
  statement follows from the long exact sequences of the diagram
  \[
  \begin{CD}
    0 @>>> \cB_A^\dagger(\K) @>{p}>> \cB_A^\dagger(\K) @>>> \cB_A(\F) @>>> 0 \\
    @.      @AAA                    @AAA                  @A{\cong}AA   @. \\
    0 @>>> \cB_A(\K)        @>{p}>> \cB_A(\K)         @>>> \cB_A(\F) @>>> 0 \\
    @.      @AAA                    @AAA                  @AAA          @. \\
      @.    0               @.        0               @.   0
  \end{CD}
  \]
  as in \cite[Theorem 8.5]{M}. Therefore
  $\cB_A(\K) \hookrightarrow \cB_A^\dagger(\K)$ is
  quasi-isomorphism. Similarly,
  $\cC_A(\K) \hookrightarrow \cC_A^\dagger(\K)$ is a
  quasi-isomorphism. The rest of the statement follows from
  Lemma~\ref{lem:restriction-quasi-isomorphism}.

  Part (2) follows as in Proposition~\ref{prop:delta-A} after noticing
  that the overconvergence property is preserved by $D^A$.
\end{proof}

\section{The Frobenius Endomorphism}

Let
$p^{Q_A}, p^{Q_A^\vee} \in \End \left(\cR_A^\dagger(\K) \otimes
  \bigwedge(\K^n)\right)$ be defined by
\[
p^{Q_A}(x^\gamma y^\lambda e^I) = p^\xi x^\gamma y^\lambda e^I
\quad\text{and}\quad
p^{Q_A^\vee}(x^\gamma y^\lambda e^I) = p^{\xi^\vee} x^\gamma y^\lambda e^I,
\]
where $Q_A(x^\gamma y^\lambda e^I) = \xi x^\gamma y^\lambda e^I$ and
$Q_A^\vee(x^\gamma y^\lambda e^I) = \xi^\vee x^\gamma y^\lambda
e^I$.

\begin{lemma}
  If $\Theta_A' \colon \cR_A^\dagger(\K) \to \cR_{pA}^\dagger(\K)$ is
  defined by $\Theta_A'(x^\gamma y^\lambda) = x^{p\gamma} y^{p\lambda}$, then
  \[
  \Fr_A' = \left( \Theta_A' \otimes \Id_{\bigwedge(\K^n)} \right) p^{Q_A}
  \] 
  defines a chain map $\cB_A^\dagger(\K) \to \cB_{pA}^\dagger(\K)$.
\end{lemma}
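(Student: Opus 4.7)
The plan is to factor the chain map check along the factorization $\Fr_A' = (\Theta_A'\otimes\Id)\circ p^{Q_A}$ and check that each factor has a clean commutation relation with $d_A$ and $d_A^\vee$. A useful preliminary observation is that $\cR_{pA}^\dagger(\K)$ coincides with $\cR_A^\dagger(\K)$ as a ring (the defining conditions $(\lambda A^{-T})_i \ge 0$ and the vanishing condition $\gamma A^{-1}\lambda^T > 0$ are invariant under the rescaling $A \mapsto pA$), so the target differential $d_{pA} + d_{pA}^\vee$ lives on the same underlying $\K$-vector space as the source differential, and the only issue is how the two differentials differ.

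The first step is to verify by direct substitution on monomials the four intertwining identities
\[
\theta_{pA,i}\Theta_A' = p\,\Theta_A'\theta_{A,i}, \qquad \varphi_{pA,i}\Theta_A' = p\,\Theta_A'\varphi_{A,i}, \qquad T_{pA,i}^\vee\Theta_A' = \Theta_A'T_{A,i}^\vee, \qquad \psi_{pA,i}^\vee\Theta_A' = \Theta_A'\psi_{A,i}^\vee.
\]
The extra factor of $p$ on the $\theta,\varphi$ side appears because these operators track $x$-degree, which $\Theta_A'$ scales by $p$, whereas on the $T^\vee,\psi^\vee$ side the $p$ appearing in $pA^{-T}/p$ and in $e_i(pA)^T$ cancels the $p$ coming from $y\mapsto y^p$. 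Tensoring with $\Id$ on $\bigwedge(\K^n)$ then yields
\[
d_{pA}\circ(\Theta_A'\otimes\Id) = p\,(\Theta_A'\otimes\Id)\circ d_A, \qquad d_{pA}^\vee\circ(\Theta_A'\otimes\Id) = (\Theta_A'\otimes\Id)\circ d_A^\vee.
\]

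The second step is to invoke the commutator identities $[Q_A, d_A] = d_A$ and $[Q_A, d_A^\vee] = 0$ from Section 5, which exponentiate to $p^{Q_A}\,d_A = p\,d_A\,p^{Q_A}$ and $p^{Q_A}\,d_A^\vee = d_A^\vee\,p^{Q_A}$. Combining these with the identities of the first step, the factor of $p$ produced by $\Theta_A'$ on the $d_A$ component cancels against the factor of $p^{-1}$ coming from commuting $p^{Q_A}$ past $d_A$, giving
\[
(d_{pA}+d_{pA}^\vee)\,\Fr_A' = (\Theta_A'\otimes\Id)(p\,d_A + d_A^\vee)\,p^{Q_A} = (\Theta_A'\otimes\Id)\,p^{Q_A}(d_A + d_A^\vee) = \Fr_A'\,(d_A+d_A^\vee),
\]
which is the desired chain map relation.

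A final routine check is that $\Fr_A'$ preserves overconvergence: the image of a term $a_{\gamma,\lambda}\,x^\gamma y^\lambda e^I$ is $p^\xi a_{\gamma,\lambda}\,x^{p\gamma}y^{p\lambda}e^I$ for some $\xi\in\{0,1,\ldots,n\}$, so $\ord_p(a_{\gamma,\lambda})\ge M(|\gamma|+|\lambda|)$ gives $\ord_p(p^\xi a_{\gamma,\lambda})\ge (M/p)(|p\gamma|+|p\lambda|)$, confirming the estimate~\eqref{overconvergent} for $\Fr_A'(f)$. The computation is essentially bookkeeping, and the main conceptual point, rather than an obstacle in the usual sense, is that the $Q_A$-bigrading of Section 5 was set up precisely so that $d_A$ has degree $1$ and $d_A^\vee$ has degree $0$ with respect to $Q_A$; this is exactly what makes the factors of $p$ coming from $\Theta_A'$ cancel against the conjugation by $p^{Q_A}$, and it is the reason why $p^{Q_A}$ (rather than any other correction factor) is the right thing to insert to upgrade $\Theta_A'$ to a chain map.
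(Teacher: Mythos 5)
Your proof is correct and follows essentially the same route as the paper: the paper's two-line computation is exactly your combination of the intertwining relations $\Theta_A'(\theta_{A,i}+\varphi_{A,i})\,p=(\theta_{pA,i}+\varphi_{pA,i})\Theta_A'$, $\Theta_A'(T_{A,i}^\vee+\psi_{A,i}^\vee)=(T_{pA,i}^\vee+\psi_{pA,i}^\vee)\Theta_A'$ with the Section 5 commutators $p^{Q_A}d_{A,i}=d_{A,i}p^{Q_A+1}$ and $[Q_A,d_{A}^\vee]=0$. The added overconvergence check is a harmless (and correct) extra detail.
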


\begin{proof}
  This follows from
  \begin{align*}
    \Fr_A' d_{A,i} & = \Theta_A' (\theta_{A,i} + \varphi_{A,i})e_i\, p^{Q_A + 1} 
                  = d_{pA,i} \Fr_A'\,;\\
    \Fr_A' d_{A,i}^\vee & = \Theta_A' (T_{A,i}^\vee + \psi_{A,i}^\vee) e_i^\vee\, p^{Q_A}
                       = d_{pA,i}^\vee \Fr_A'\,.
\end{align*}
\end{proof}

\begin{lemma}
  If $\Theta_A'' \colon \cR_{pA}^\dagger(\K) \to \cR_A^\dagger(\K)$
  is defined by
  \[
  \Theta_A'' (x^\gamma y^\lambda) = Z_A(x) Z_{A^T}(y) x^\gamma y^\lambda,
  \]
  where $Z_A(x) = e^{\pi\left(W_{pA}(x)-W_A(x) \right)}$, then
  \[
  \Fr_A'' = \left(\Theta_A'' \otimes \Id_{\bigwedge(\K^n)} \right) p^{Q_{pA}^\vee}
  \]
  defines a chain map $\cB_{pA}^\dagger(\K) \to \cB_A^\dagger(\K)$.
\end{lemma}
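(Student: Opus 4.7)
The plan is to verify the chain-map condition $(d_A + d_A^\vee)\,\Fr_A'' = \Fr_A''\,(d_{pA} + d_{pA}^\vee)$ index-by-index, that is, to establish $d_{A,i}\,\Fr_A'' = \Fr_A''\,d_{pA,i}$ and $d_{A,i}^\vee\,\Fr_A'' = \Fr_A''\,d_{pA,i}^\vee$ for each $1\le i \le n$. The overall structure parallels the preceding lemma, except that the substitution $\Theta_A'$ is replaced by multiplication by the overconvergent kernel $Z_A(x)\,Z_{A^T}(y)$. Overconvergence of the image will be automatic: since $W_{pA}(x)-W_A(x) = \sum_j(x^{p e_jA}-x^{e_jA})$, the series $Z_A$ factors as a product of Dwork splitting functions $e^{\pi(y^p-y)}\big|_{y=x^{e_jA}}$, each of which is overconvergent under the hypothesis $\pi^{p-1}=-p$; the same holds for $Z_{A^T}$ and hence for their product.

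The first step is to record two intertwining identities between $\Theta_A''$ and the polynomial parts of the differentials. Using the independence of $Z_{A^T}$ from $x$ together with $x_i\partial_{x_i}Z_A = \pi(\theta_iW_{pA}-\theta_iW_A)\,Z_A$ and the Leibniz rule will give
\[
(\theta_{A,i}+\varphi_{A,i})\,\Theta_A'' = \Theta_A''\,(\theta_{pA,i}+\varphi_{pA,i}).
\]
In the $y$-variables, the analogous computation relies on the monomial identities $\pi T_{A,i}^\vee W_{A^T}(y) = y^{e_iA^T}$ and $\pi T_{A,i}^\vee W_{pA^T}(y) = p\,y^{p e_iA^T}$, together with a telescoping cancellation between the $y^{e_iA^T}$ term produced by Leibniz and the shift carried by $\psi_{A,i}^\vee$, and will yield
\[
(T_{A,i}^\vee+\psi_{A,i}^\vee)\,\Theta_A'' = p\,\Theta_A''\,(T_{pA,i}^\vee+\psi_{pA,i}^\vee).
\]
The extra factor of $p$ here is precisely what $p^{Q_{pA}^\vee}$ is built to absorb.

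Next I would compute how $p^{Q_{pA}^\vee}$ commutes with the Clifford generators. A direct count on monomials $x^\gamma y^\lambda e^I$, noting that toggling $i$ in or out of $I$ changes the eigenvalue of $Q_{pA}^\vee$ by $\pm P_{pA,i}^\vee$, gives $e_i\,p^{Q_{pA}^\vee}=p^{P_{pA,i}^\vee}\,p^{Q_{pA}^\vee}\,e_i$ and $p^{Q_{pA}^\vee}\,e_i^\vee = p^{P_{pA,i}^\vee}\,e_i^\vee\,p^{Q_{pA}^\vee}$. Plugging these into the assembly, the $d^\vee$-identity reduces to tracking powers of $p$ on each of the components $T_{pA,i}^\vee$ and $\psi_{pA,i}^\vee$: the factor $p$ from the intertwining is either absorbed by the prefactor $p^{-1}$ in $T_{pA,i}^\vee = p^{-1}T_{A,i}^\vee$ when $P_{pA,i}^\vee=1$, or acts on a state whose scalar $(\lambda A^{-T})_i$ already vanishes; on the $\psi$-component it is balanced by the jump of $p^{Q_{pA}^\vee}$ at the shifted weight $\lambda+pe_iA^T$.

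The main obstacle is the $d_{A,i}$-identity, which at first produces a discrepancy $d_{A,i}\,\Fr_A'' = p^{P_{pA,i}^\vee}\,\Fr_A''\,d_{pA,i}$ whose extra factor is not automatically $1$. I plan to remove it by the ring-theoretic argument from Lemma~\ref{chain}: on the image of $P_{pA,i}^\vee$ one has $(\lambda A^{-T})_i\neq 0$, and the quotient relation $\gamma A^{-1}\lambda^T=0$ defining $\cR_{pA}(\K)$ then forces $\gamma_i=0$, hence $\theta_{pA,i}(x^\gamma y^\lambda)=0$; moreover, each monomial $x^{\gamma+pe_jA}y^\lambda$ occurring in $\varphi_{pA,i}(x^\gamma y^\lambda)$ satisfies $(\gamma+pe_jA)A^{-1}\lambda^T = p\lambda_j \ge pA_{ji}(\lambda A^{-T})_i >0$ whenever $A_{ji}>0$, and therefore vanishes in $\cR_{pA}(\K)$. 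Consequently $(\theta_{pA,i}+\varphi_{pA,i})$ annihilates the image of $P_{pA,i}^\vee$, so the unwanted factor $p^{P_{pA,i}^\vee}$ acts on a zero state and the chain-map identity holds on all of $\cB_{pA}^\dagger(\K)$.
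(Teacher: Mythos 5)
Your proposal is correct and follows essentially the same route as the paper: the two intertwining identities $(\theta_{A,i}+\varphi_{A,i})\,\Theta_A''=\Theta_A''(\theta_{pA,i}+\varphi_{pA,i})$ and $(T_{A,i}^\vee+\psi_{A,i}^\vee)\,\Theta_A''=p\,\Theta_A''(T_{pA,i}^\vee+\psi_{pA,i}^\vee)$ are exactly the paper's key computations, and the bookkeeping of $p^{Q_{pA}^\vee}$ is handled the same way. The only difference is presentational: where the paper implicitly invokes the Section~5 commutation relations $[Q_{A,i}^\vee,d_{A,j}]=0$ and $[Q_{A,i}^\vee,d_{A,j}^\vee]=\delta_{ij}d_{A,j}^\vee$ (whose proofs already contain the vanishing $P_{pA,i}^\vee\varphi_{pA,i}=0$ you re-derive), you verify the conjugation of $p^{Q_{pA}^\vee}$ by the Clifford generators by hand and kill the resulting $p^{P_{pA,i}^\vee}$ discrepancy directly.
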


\begin{proof}
  It is well known (see e.g.\ \cite{M}) that $Z_A(x)$ satisfies
  (\ref{overconvergent}). Therefore, $\Theta_A''$ is well defined. We
  compute
  \begin{align*}
    \left( \theta_{A,i} + \varphi_{A,i}\right) \Theta_A''
    &= \theta_{A,i} \left(W_{pA}(x) - W_A(x) \right) \Theta_A''
       +\Theta_A''\, \theta_{pA,i} + \varphi_{A,i}\, \Theta_A'' \\
    &= \Theta_A''\, \varphi_{pA,i} - \varphi_{A,i}\, \Theta_A'' 
    + \Theta_A''\, \theta_{pA,i} + \varphi_{A,i}\, \Theta_A'' \\
    &= \Theta_A''\left( \theta_{pA,i} + \varphi_{pA,i} \right),
  \end{align*}
  from which we see that
  \[
  \Fr_A''\, d_{pA,i} = \Theta_A'' \left(\theta_{pA,i} + \varphi_{pA,i}\right) e_i\, p^{Q_{pA}^\vee}
                  = d_{A,i}\, \Fr_A''\, .
  \]
  On the other hand, for each $1 \leq i \leq n$
  \begin{align*} 
    \left(T_{A,i}^\vee + \psi_{A,i}^\vee\right) \Theta_{A,p} 
    &=  T_{A,i}^\vee \left( W_{pA^T}(y) - W_{A^T} (y) \right) \Theta_A''
        + \Theta_A''\, T_{A,i}^\vee + \psi_{A,i}^\vee\, \Theta_A'' \\
    &= p\, \Theta_A''\, \psi_{pA,i}^\vee - \psi_{A,i}^\vee\, \Theta_A'' + p\,\theta_A''\, T_{pA,i}''
    + \psi_{A,i}^\vee\, \Theta_{A}'' \\
    &= p\, \Theta_A''\left( T_{pA,i}^\vee + \psi_{pA,i}^\vee\right)\, .
  \end{align*}
  Therefore,
  \[
  \Fr_A''\, d_{pA,i}^\vee = \Theta_A'' \left(T_{pA,i}^\vee + \psi_{pA,i}^\vee\right) e_i\, p^{Q_A^\vee -1} 
                        = d_{A,i}^\vee\, \Fr_A''.
  \]
\end{proof}

\begin{lemma}
  Let $\widehat{P}_{A,i} \in \End_\K \left(\cS_A^\dagger(\K)\right)$ be defined by
  \[
  \widehat{P}_{A,i}(x^\gamma y^\lambda) = 
  \cas{
    0  &\textnormal{if }(\gamma A^{-1})_i =0\, ;\\
    x^\gamma y^\lambda   &\textnormal{otherwise}\, .
  }
  \]
  \begin{enumerate}
   \item If we define
   \begin{align*}
  \widehat{Q}_{A,i} & = \widehat{P}_{A,i} \, E_{A,i} \, E_{A,i}^\vee\\
  \widehat{Q}_{A,i}^\vee & = E_{A,i}^\vee\, E_{A,i} + \widehat{Q}_{A,i}
  \end{align*} 
  then
    \[
    \Delta^A\, Q_{A,i}^\vee = \widehat{Q}_{A^T,i} \, \Delta^A
    \quad\text{and}\quad
    \Delta^A \, Q_{A,i} = \widehat{Q}_{A^T,i}^\vee \, \Delta^A\, .
    \]
  \item If we define
  \[
  \widehat{d}_{A,i} = \left(T_{A,i} + \psi_{A,i} \right) E_{A,i}\quad \text{and}\quad
   \widehat{D}_{A,i}^\vee = \left(\theta_{A,i}^\vee + \psi_{A,i}^\vee \right) E_{A,i}^\vee
   \] 
   then
     \[
      [ \widehat{Q}_{A,i}, \widehat{d}_{A,j}^\vee ] = 0
      = [ \widehat{Q}_{A,i}^\vee, \widehat{d}_{A,j}]
    \]
    and
    \[
      [ \widehat{Q}_{A,i}, \widehat{d}_{A,j} ] = \delta_{ij}\, \widehat{d}_{A,j} 
      \,;\quad
      [ \widehat{Q}_{A,i}^\vee, \widehat{d}_{A,j}^\vee ] = \delta_{ij}\, \widehat{D}_{A,j}^\vee\, .
    \]
   \end{enumerate}
\end{lemma}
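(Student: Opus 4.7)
My plan is to establish both parts by direct computation. Part~(1) reduces to the $D^A$- and $*^A$-intertwinings set up in Section~3, while part~(2) mirrors the bigrading lemma of Section~5 in the hat-world.

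For part~(1), I factor $\Delta^A = D^A \otimes *^A$ and verify each identity factor-by-factor. On the polynomial side, $D^A$ swaps the roles of $\gamma$ and $\lambda$, converting the test $(\lambda A^{-T})_i \neq 0$ that defines $P_{A,i}^\vee$ on the input monomial $x^\gamma y^\lambda$ into the test $(\lambda (A^T)^{-1})_i \neq 0$ that defines $\widehat{P}_{A^T,i}$ on the $x$-exponent of the output monomial $x^\lambda y^\gamma$; hence $D^A\, P_{A,i}^\vee = \widehat{P}_{A^T,i}\, D^A$. On the exterior side, the identities $*^A e_i = E_{A^T,i}^\vee *^A$ and $*^A e_i^\vee = E_{A^T,i} *^A$ from Section~3 yield $*^A\, e_i^\vee e_i = E_{A^T,i} E_{A^T,i}^\vee\, *^A$ and $*^A\, e_i e_i^\vee = E_{A^T,i}^\vee E_{A^T,i}\, *^A$. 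Combining the two sides gives $\Delta^A Q_{A,i}^\vee = \widehat{Q}_{A^T,i}\, \Delta^A$ and $\Delta^A Q_{A,i} = \widehat{Q}_{A^T,i}^\vee\, \Delta^A$.

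For part~(2), I would follow the template of the bigrading lemma of Section~5. Three driving observations suffice: (a) $\widehat{P}_{A,i}$ is diagonal in the $x$-exponent and commutes with every purely $y$-dependent operator ($T^\vee$, $\theta^\vee$, $\psi^\vee$); (b) on $\cS_A^\dagger(\K)$, $\widehat{P}_{A,i}\, T_{A,i} = T_{A,i}$ and $\widehat{P}_{A,i}\, \psi_{A,i} = \psi_{A,i}$, since $\psi_{A,i}$ raises $(\gamma A^{-1})_i$ by one and $T_{A,i}$ carries the factor $(\gamma A^{-1})_i$; (c) the Clifford relation $\{E_{A,i}, E_{A,j}^\vee\} = \delta_{ij}$ gives $[E_{A,i} E_{A,i}^\vee, E_{A,j}] = \delta_{ij} E_{A,j}$ together with its $^\vee$-dual, while the mixed commutators cancel once weighted by~(b). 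Assembling (a)--(c) in the same pattern as the Section~5 proof produces all four commutators.

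A partial $\Delta^A$-transport shortcut is available: combining part~(1) with Proposition~\ref{prop:delta-A} translates $[Q_{A,i}^\vee, d_{A,j}^\vee] = \delta_{ij} d_{A,j}^\vee$ and $[Q_{A,i}, d_{A,j}^\vee] = 0$ directly into $[\widehat{Q}_{A,i}, \widehat{d}_{A,j}] = \delta_{ij} \widehat{d}_{A,j}$ and $[\widehat{Q}_{A,i}^\vee, \widehat{d}_{A,j}] = 0$. The relations involving $\widehat{D}^\vee$ are not covered by this transport, since $\Delta^A$-conjugation of $d_{A,j}$ produces a $\varphi^\vee$-style term rather than the $\psi^\vee$-term appearing in $\widehat{D}^\vee$; those must be verified directly via (a)--(c). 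I therefore expect the main technical obstacle to be the Clifford sign bookkeeping in the direct computation, exactly as in the corresponding step of the Section~5 proof, where the annihilations $P_{A,i}^\vee \varphi_{A,i} = 0 = P_{A,i}^\vee \theta_{A,i}$ on $\cR_A(\F)$ are now replaced by their dual counterparts on $\cS_A^\dagger(\K)$.
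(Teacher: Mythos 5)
Your proposal is correct and follows essentially the same route as the paper: part (1) by factoring $\Delta^A = D^A \otimes *^A$ and using $D^A P_{A,i}^\vee = \widehat{P}_{A^T,i} D^A$ together with the $*^A$-intertwining relations of Section~3, and part (2) by direct Clifford-algebra computation combined with the absorption identities $\widehat{P}_{A,i}(T_{A,i}+\psi_{A,i}) = T_{A,i}+\psi_{A,i}$ and the dual annihilation argument (nonvanishing forces $\gamma A^{-1}\lambda^T > 0$, hence the monomial dies in $\cR_A$). Your optional $\Delta^A$-transport of the Section~5 commutators to obtain $[\widehat{Q}_{A,i},\widehat{d}_{A,j}]=\delta_{ij}\widehat{d}_{A,j}$ and $[\widehat{Q}_{A,i}^\vee,\widehat{d}_{A,j}]=0$ is a valid minor shortcut not taken by the paper, and you correctly identify that the remaining two relations still require the direct verification.
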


\begin{proof}
  We compute
  \begin{align*}
    [ \widehat{Q}_{A,i} , \widehat{d}_{A,k} ] 
    &= [\widehat{P}_{A,i}\, E_{A,i}\, E_{A,i}^\vee,
        \left(T_{A,j} + \psi_{A,j} \right) E_{A,j}] \\
    &= \delta_{ij}\, \widehat{P}_{A,i} \left( T_{A,j} + \psi_{A,j} \right) E_{A,j} \\
    &= \delta_{ij}\, \widehat{d}_{A,j},
  \end{align*}
  from which we see that
  \begin{equation*}
    [\widehat{Q}_{A,i}^\vee, \widehat{d}_{A,j}] = [E_{A,i}^\vee \, E_{A,i}, \widehat{d}_{A,j}] + \delta_{ij} \, \widehat{d}_{A,j} = \delta_{ij} \left( -\widehat{d}_{A,j} + \widehat{d}_{A,j} \right)=0\, .
  \end{equation*}
  Similarly,
  \begin{align*}
    [\widehat{Q}_{A,i}, \widehat{d}_{A,j}^\vee] 
    &= \widehat{P}_{A,i} \left( \theta_{A,j}^\vee + \psi_{A,j}^\vee \right) 
                        [E_{A,i}\, E_{A,i}^\vee, E_{A,j}^\vee ] \\
    &= -\delta_{ij} \, \widehat{P}_{A,i} 
       \left(\theta_{A,i}^\vee + \varphi_{A,i}^\vee \right) E_{A,i}^\vee\, .
  \end{align*}
  Since $\widehat{P}_{A,i}\, \theta_{A,i}^\vee (x^\gamma y^\lambda) \neq 0$ implies
  $(\gamma A^{-1})_i (\lambda A^{-T})_i > 0$, then
  \[
  (\gamma A^{-1})_i A_{ii} (\lambda A^{-T})_i > 0
  \] 
  and thus $\gamma A^{-1} \lambda^T > 0$. Therefore, $[ \widehat{Q}_{A,i}, \widehat{d}_{A,j}^\vee ] = 0$. As a consequence,
  \[
  [\widehat{Q}_{A,i}^\vee, \widehat{d}_{A,j}^\vee] 
  = [E_{A,i}^\vee\, E_{A,i}, \widehat{d}_{A,j}^\vee] = \delta_{ij} \widehat{d}_{A,j}^\vee,
  \]
  which concludes the proof of (2). For (1), we compute
  \begin{align*}
    \Delta^A\, Q_{A,i}^\vee &=  D^A\,P_{A,i}^\vee \otimes *^A\,e_i^\vee\,e_i  = \widehat{P}_{A^T,i}\, D^A \otimes E_{A^T,i}\, E_{A^T,i}^\vee \, *^A = \widehat{Q}_{A^T,i}\, \Delta^A\, ;\\
    \Delta^A\, Q_{A,i} &= \Delta \left( e_i\, e_i^\vee + Q_{A,i}^\vee \right) = \left( E_{A^T,i}^\vee\, E_{A^T,i} + \widehat{Q}_{A^T,i} \right) \Delta^A = \widehat{Q}_{A^T,i}^\vee \, \Delta^A\, .
  \end{align*}
\end{proof}

\begin{proposition}\label{prop:Fr_A-comm}
For each $A$ in $\BH_n(\F)$ the \emph{Frobenius endomorphism} defined by
\[
{\rm Fr}_A = \left((\Theta''_A\Theta_A')\otimes {\rm Id}_{\bigwedge (\mathbb K^n)}\right)p^{Q_A+Q_A^\vee}\,
\] 
is a chain map and
  \[
  \Delta^A \, \Fr_A = \Fr_{A^T} \, \Delta^A \, p^{2\,\ext - n}\, p^{-2 \widehat{Q}_A} \, p^{2 Q_A^\vee}.
  \]
\end{proposition}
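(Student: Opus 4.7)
The first part of the proposition, that $\Fr_A$ is a chain map, follows by identifying $\Fr_A = \Fr_A'' \circ \Fr_A'$ from the two preceding lemmas. To establish the factorization, I would use that the indicator $[(\lambda A^{-T})_i \neq 0]$ defining $Q_{A,i}^\vee$ is invariant under the rescaling $A \mapsto pA$, $\lambda \mapsto p\lambda$, so that $p^{Q_{pA}^\vee}\Theta_A' = \Theta_A' p^{Q_A^\vee}$; combined with $[Q_A, Q_A^\vee] = 0$ from Section 5, this gives $\Fr_A'' \Fr_A' = (\Theta_A''\Theta_A' \otimes \Id)\,p^{Q_A + Q_A^\vee} = \Fr_A$.

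For the commutation relation, the strategy is to move $\Delta^A$ to the far right on both sides and reduce to a scalar identity of exponentials on $\cC_{A^T}^\dagger(\K)$. The lemma immediately preceding the proposition already gives $\Delta^A Q_{A,i} = \widehat{Q}_{A^T,i}^\vee \Delta^A$ and $\Delta^A Q_{A,i}^\vee = \widehat{Q}_{A^T,i}\Delta^A$. By a parallel direct calculation using $*^A E_{A,i} E_{A,i}^\vee = e_i^\vee e_i\, *^A$ and $D^A \widehat{P}_{A,i} = P_{A^T,i}^\vee D^A$, I would establish the reverse identities
\[
\Delta^A\, \widehat{Q}_{A,i} = Q_{A^T,i}^\vee\, \Delta^A, \qquad \Delta^A\, \widehat{Q}_{A,i}^\vee = Q_{A^T,i}\,\Delta^A.
\]
Subtracting these yields $\Delta^A\,\ext = (n - \ext)\Delta^A$; a direct check on monomials also gives $D^A\Theta_A''\Theta_A' = \Theta_{A^T}''\Theta_{A^T}'\,D^A$, and hence $\Delta^A(\Theta_A''\Theta_A'\otimes \Id) = (\Theta_{A^T}''\Theta_{A^T}'\otimes\Id)\,\Delta^A$.

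Combining these, the left-hand side becomes $(\Theta_{A^T}''\Theta_{A^T}'\otimes \Id)\,p^{\widehat{Q}_{A^T}^\vee + \widehat{Q}_{A^T}}\,\Delta^A$. On the right-hand side, moving $\Delta^A$ rightward past $p^{2\ext-n}$, then $p^{-2\widehat{Q}_A}$, then $p^{2Q_A^\vee}$ in turn produces
\[
(\Theta_{A^T}''\Theta_{A^T}'\otimes \Id)\,p^{Q_{A^T} + Q_{A^T}^\vee}\,p^{n-2\ext}\,p^{-2Q_{A^T}^\vee}\,p^{2\widehat{Q}_{A^T}}\,\Delta^A.
\]
The operators $Q_{A^T}$, $Q_{A^T}^\vee$, $\ext$ are mutually commuting (all diagonal on the $\{e^I\}$ basis), so the first three exponentials merge to $p^{Q_{A^T} - Q_{A^T}^\vee + n - 2\ext} = p^{n-\ext}$ using $Q_{A^T} - Q_{A^T}^\vee = \ext$. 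The Clifford relations give $\widehat{Q}_{A^T}^\vee - \widehat{Q}_{A^T} = \sum_i E_{A^T,i}^\vee E_{A^T,i} = n - \ext$, so $p^{n-\ext} = p^{\widehat{Q}_{A^T}^\vee - \widehat{Q}_{A^T}}$; combining with $p^{2\widehat{Q}_{A^T}}$ (and using that $\widehat{Q}_{A^T}^\vee$ and $\widehat{Q}_{A^T}$ commute, being simultaneously diagonal in the $E_{A^T}$-adapted basis of $\bigwedge(\K^n)$) produces $p^{\widehat{Q}_{A^T}^\vee + \widehat{Q}_{A^T}}$, matching the left-hand side.

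The subtle point is that $p^{-2\widehat{Q}_A}$ is not diagonal on the monomial basis $x^\gamma y^\lambda e^I$; it is only diagonal in the $\{E_{A,i}\}$-adapted basis of $\bigwedge(\K^n)$. The key input that bypasses this difficulty is the reverse commutation $\Delta^A\, \widehat{Q}_A = Q_{A^T}^\vee\,\Delta^A$: after being moved through $\Delta^A$, the operator becomes monomial-diagonal on the target side, and all subsequent exponent manipulations become unambiguous.
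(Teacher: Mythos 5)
Your argument is correct and follows essentially the same route as the paper: both parts rest on the commutation $p^{Q_{pA}^\vee}\left(\Theta_A'\otimes\Id\right)=\left(\Theta_A'\otimes\Id\right)p^{Q_A^\vee}$, on the lemma relating $\Delta^A$ to the hatted and unhatted grading operators, and on the linear relations $Q_A=\ext+Q_A^\vee$ and $\widehat{Q}_A^\vee=(n-\ext)\Id+\widehat{Q}_A$. The only organizational difference is that you push $\Delta^A$ to the far right on both sides and match, and in doing so you explicitly supply the reverse identities $\Delta^A\,\widehat{Q}_{A,i}=Q_{A^T,i}^\vee\,\Delta^A$, which the paper's one-line step $p^{-Q_{A^T}-Q_{A^T}^\vee}\,\Delta^A=\Delta^A\,p^{-\widehat{Q}_A-\widehat{Q}_A^\vee}$ uses implicitly.
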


\begin{proof} Since  $p^{Q_{pA}^\vee} \left( \Theta_A' \otimes \Id \right)
  = \left( \Theta_A' \otimes \Id \right) p^{Q_A^\vee}$, then ${\rm Fr}_A={\rm Fr}''_A{\rm Fr}'_A$ is a chain map. For the second statement,  $D^A \, \Theta_A'' \, \Theta_A' = \Theta_A'' \, \Theta_A' \, D^A$ implies
   \begin{align*}
    \Delta^A \, \Fr_A 
     &= \Delta^A \left( \Theta_A'' \, \Theta_A' \otimes \Id \right) p^{Q_A^\vee + Q_A} \\
     &= \Fr_{A^T}\, p^{-Q_{A^T} - Q_{A^T}^\vee} \, \Delta^A \, p^{Q_A^\vee + Q_A} \\
     &= \Fr_{A^T} \, \Delta^A \, p^{-\widehat{Q}_A - \widehat{Q}_A^\vee }\, p^{Q_A^\vee + Q_A}\\
     &= \Fr_{A^T} \, \Delta^A \, p^{2\,\ext - n} \, p^{-2\widehat{Q}_A} \, p^{2 Q_A^\vee}.
  \end{align*}
\end{proof}

\begin{theorem}\label{thm:TFr}
  Let $\#_A$ (respectively $\#_A^\vee$) be the operator on $\cS_A(\K)$
  diagonalized by monomials and such that the eigenvalue of $x^\gamma y^\lambda$ is
  the number of non-integer entries of $\gamma A^{-1}$ (respectively $\lambda A^{-T}$).
  If $\kappa$ is such that $(\kappa\pi)^{p-1}=p$, then the \emph{twisted Frobenius endomorphism}
  \[
  \TFr_A = \Fr_A \, (\kappa\pi)^{(p-1)(\#_A - \#_A^\vee)/2}
  \]
  is a chain map, and
  \[
  H(\Delta^A) H(\TFr_A) = H(\TFr_{A^T}) H(\Delta^A)\, .
  \]
\end{theorem}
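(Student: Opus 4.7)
My plan is to combine Proposition~\ref{prop:Fr_A-comm} with the diagonal correction $(\kappa\pi)^{(p-1)(\#_A-\#_A^\vee)/2}$ and reduce everything, first to a chain-level identity via the intertwining properties of $\Delta^A$, and then to a combinatorial claim about Milnor-ring representatives.

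First, I would check that $\TFr_A$ is a chain map. Since $\Fr_A$ is one by Proposition~\ref{prop:Fr_A-comm}, it is enough to show that $\#_A$ and $\#_A^\vee$ commute with $d_A+d_A^\vee$. The pieces $\theta_{A,i},\varphi_{A,i},T_{A,i},T_{A,i}^\vee$ act diagonally on $x^\gamma y^\lambda$ and commute trivially with $\#_A,\#_A^\vee$. For the shift operators, $\psi_{A,i}$ translates $\gamma A^{-1}$ by the integer vector $e_i$ and $\psi_{A,i}^\vee$ translates $\lambda A^{-T}$ by $e_i$, preserving the integer/non-integer status of every coordinate and hence the values of $\#_A$ and $\#_A^\vee$.

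Next, I would unfold the cohomological equation. The relations $\#_{A^T}\Delta^A=\Delta^A\#_A^\vee$ and $\#_{A^T}^\vee\Delta^A=\Delta^A\#_A$ are immediate from $D^A(x^\gamma y^\lambda)=x^\lambda y^\gamma$; combined with Proposition~\ref{prop:Fr_A-comm} and $(\kappa\pi)^{p-1}=p$, the target equation reduces to
\[
H\bigl(\Delta^A\,p^{2\ext-n-2\widehat Q_A+2Q_A^\vee}\bigr)\;=\;H\bigl(\Delta^A\,p^{\#_A^\vee-\#_A}\bigr)
\]
on $H(\cC_A(\K))$. Using the sector decomposition of Lemma~\ref{lem:B_A-decomposition} and the quasi-isomorphism $\cB_{A^\lambda}^0\hookrightarrow\cB_A^\lambda$, this further reduces to the case $\lambda=0$ for the restricted matrix $A^\lambda$.

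Finally, I would evaluate both sides on $\lambda=0$ representatives. For a top-degree lift $\Theta(x^\gamma e_1\wedge\cdots\wedge e_n)=x^{\gamma+\mathbf{1}}e_1\wedge\cdots\wedge e_n$ with $x^\gamma$ a Milnor-ring basis element of $W_A$, a direct calculation using $(AA^{-1})_{ii}=1$ shows $E_{A,i}E_{A,i}^\vee(e_1\wedge\cdots\wedge e_n)=e_1\wedge\cdots\wedge e_n$, so $\widehat Q_A$ acts by $\#\{i:((\gamma+\mathbf{1})A^{-1})_i\ne 0\}$, while $\ext=n,\ Q_A^\vee=0,\ \#_A^\vee=0$. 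The target identity becomes the combinatorial claim
\[
\#\{i:((\gamma+\mathbf{1})A^{-1})_i\in\Z_{>0}\}\;=\;\#\{i:((\gamma+\mathbf{1})A^{-1})_i=0\}
\]
for each Milnor-ring representative. This last step is the main obstacle: I would establish it case-by-case for chains and loops using the explicit bases of Corollary~\ref{cor:BH_n-basis} together with the final Corollary of Section~2, which controls the (non-)integrality of the coordinates of $\beta A^{-1}$ along chain and loop positions. A subsidiary technical point is that the $\Theta$-lifts are $d_A$-cocycles but not $(d_A+d_A^\vee)$-cocycles, so upgrading them to full cocycles requires correction terms in strictly lower $\ext$-degree; a spectral-sequence argument in the $(Q_A,Q_A^\vee)$-bigrading of Section~5 shows that such corrections do not affect the induced cohomology operators on either side of the target identity.
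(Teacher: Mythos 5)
Your proposal follows the paper's own argument essentially step for step: the chain-map property from diagonality of the twist, Proposition~\ref{prop:Fr_A-comm} together with the intertwining of $\#_A,\#_A^\vee$ with $\Delta^A$ to reduce the claim to matching the eigenvalues of $2\,\ext-n-2\widehat Q_A+2Q_A^\vee$ and $-(\#_A-\#_A^\vee)$ on monomial representatives of cohomology, and a final case-by-case check for chains and loops via Corollary~\ref{cor:BH_n-basis}. Your reduction to the untwisted sector of $A^\lambda$ and the closing combinatorial identity are just a repackaging of the paper's direct computation on the generators $x^{\gamma+I}y^\lambda e^I$ (namely that $2\,\ext-n+2Q_A^\vee-\#_A^\vee=|I|$ and $2\widehat Q_A-\#_A=\ext$ there), and the decisive verification is deferred to inspection of the explicit bases in both treatments.
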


\begin{proof}
  Since $(\kappa\pi)^{(p-1)(\#_A - \#_A^\vee)/2}$ is diagonalized by
  monomials and acts trivially on $\bigwedge(\mathbb K^n)$, it
  commutes with $d_A+d_A^\vee$. Therefore, $\TFr_A$ is a chain map. Using
  Proposition~\ref{prop:Fr_A-comm}, we calculate
  \begin{align*}
    \Delta^A\, \TFr_A
    &= \Delta^A\, \Fr_A\, (\kappa\pi)^{(p-1)(\#_A - \#_A^\vee)/2} \\
    &= \Fr_{A^T}\, \Delta^A\,  p^{2\,\ext - n} \, p^{-2\widehat{Q}_A} \, p^{2 Q_A^\vee}\, (\kappa\pi)^{(p-1)(\#_A - \#_A^\vee)/2} \\
    &= \TFr_{A^T}\, \Delta^A\, p^{2\,\ext - n} \, p^{-2\widehat{Q}_A} \, p^{2 Q_A^\vee}\, (\kappa\pi)p^{(p-1)(\#_A - \#_A^\vee)}, 
  \end{align*}
  where the last step follows from
  \[
  (\kappa\pi)^{-(p-1)(\#_{A^T} - \#_{A^T}^\vee)/2}\, \Delta^A = \Delta^A\, (\kappa\pi)^{(p-1)(\#_A - \#_A^\vee)/2}.
  \]  
  Therefore, the theorem is proven if the eigenvalues of
  \begin{equation}\label{eq:twisted-Fr-operators}
  2\,\ext - n -2\widehat{Q}_A + 2 Q_A^\vee \quad\text{and}\quad -(\#_A - \#_A^\vee)
  \end{equation}
  agree on a monomial basis $x^\gamma y^\lambda e^I$ for
  $H(\mathcal B_A^\lambda(\mathbb K))$ for each $\lambda\in G_A$. By
  Lemma \ref{lem:B_A-decomposition} and Corollary \ref{cor:BH_n-basis}
  one can choose generators of the form $x^{\gamma+I}y^\lambda e^I$,
  where $|I|=n-|J_\lambda^\vee|$ and $0\le (\lambda A^{-1})_i<1$ for
  all $i=1,\ldots,n$. In particular, the eigenvalue of
  $2\,\ext - n +2Q_A^\vee-\#_A^\vee$ on
  $x^{\gamma+I} y^\lambda e^I\in S(A)$ is $|I|$. On the other hand,
  inspection of the bases for the cohomology of chains and loops given
  in Corollary \ref{cor:BH_n-basis} shows that
  $(2{\widehat Q}_A-\#_A)=\ext$ on $S(A)$, which concludes the proof.
\end{proof}

\section{Examples}

\begin{example}
  Let $n=1$ and $A_{11}=2$. Then $W_A(x)=W_A^T(x)=x_1^2$ and
  $G_A=G_{A^T}=\mathbb Z/2\mathbb Z$. The exterior operators are
  $E_{A,1}=2\pi e_1$ and $E_{A,1}^\vee=\frac{1}{2\pi}
  e_1^\vee$.
  Moreover, $\mathcal R_A^0(\F)=\F[x_1]\oplus y_1^2\F[y_1^2]$ and
  $\mathcal R_A^1(\F)=y_1\F[y_1^2]$. The differentials are
  \begin{align*}
    d(x_1^{\gamma_1}) & =\gamma_1 x_1^{\gamma_1} e_1 +  2 \pi x_1^{\gamma_1+2} e_1;\\
    d^\vee(y_1^{\lambda_1}e_1) & = \frac{1}{2\pi}\lambda_1 y_1^{\lambda_1} + y_1^{\lambda_1+2}\, .
  \end{align*}
  It follows that $H(\mathcal B_A^0(\F))=\F x_1e_1$ and
  $H(\mathcal B_A^1(\F))=\F y_1$ are mapped one into the other by
  $\Delta^A$. The relations in cohomology are
  \begin{align*}
    x_1^{2k+1}e_1 &= (-2\pi)^{-1} (2k-1) x_1^{2(k-1)+1}e_1 =\ldots = (-2\pi)^{-k} (2k-1)!!\, x_1 e_1;\\
    y_1^{2k+1}& =  (-2\pi)^{-1} (2k-1) y_1^{2(k-1)+1} =\ldots = (-2\pi)^{-k} (2k-1)!!\, y_1\, .
  \end{align*}
  Let $(c_m)$ be the sequence of rational numbers defined by 
  \[
  e^{\pi\left( t^p - t \right)} = \sum_{m \geq 0}  c_m (-\pi)^m t^m.
  \]
  The action of the twisted Frobenius map in cohomology is thus
  \begin{align*}
    H({\rm TFr}_A)(x_1e_1) & = p (\kappa \pi)^{(p-1)/2} e^{\pi (x_1^{2p}-x_1^2)} x_1^p e_1 \\
    & = p(\kappa\pi)^{(p-1)/2} \sum_{m\geq0} c_m (-\pi)^m x_1^{2(m+\frac{p-1}{2})+1}e_1 \\
    & = p(\kappa\pi)^{(p-1)/2} \left(\sum_{m\geq0} c_m (-\pi)^{-\frac{p-1}{2}} 2^{-(m+\frac{p-1}{2})} (2(m-1)+p)!!\right) x_1e_1\\
    & = p \kappa^{(p-1)/2} \left( \left(\frac{p-1}{2}\right)! + \mathcal O(p)\right) x_1e_1\,. 
  \end{align*}
  Similarly, 
  \begin{align*}
    H({\rm TFr}_A)(y_1)& = p^2 (\kappa\pi)^{-(p-1)/2} \left(\sum c_m (-\pi)^{-\frac{p-1}{2}} 2^{-(m+\frac{p-1}{2})} (2(m-1)+p)!!\right) y_1 \\
    & =  p \kappa^{(p-1)/2}  \left( \left(\frac{p-1}{2}\right)! + \mathcal O(p)\right) y_1\, .
  \end{align*}
  Comparison with the non-commutative Weil conjectures of Kontsevich
  \cite{K} seems to suggest a further overall rescaling of $\TFr_A$. This is likely to be relevant for arithmetic applications. We hope to come back to this point in future work.
\end{example}

\begin{example} 
  Consider the dual chains $W_{A}(x)=x_{1}^{2}x_{2}+x_{2}^{3}$ and
  $W_{A^{T}}(x)=x_{1}^{2}+x_{1}x_{2}^{3}$. The elements of
  $G_{A}\cong\Z^{2}/\Z^{2}A^{T}$ and $G_{A^{T}}\cong\Z^{2}/\Z^{2}A$
  are given in Table~\ref{table:chain-2103-groups}. We can find basis
  elements $x^{\gamma}y^{\lambda}e^{I}$ of $\cC_{A}$ and $\cC_{A^{T}}$
  as described in the proof of Theorem~\ref{thm:TFr}.  Each row of
  Table~\ref{table:chain-2103-duality} contains a pair of elements
  dual under $\Delta^{A}$ (up to constants), as well as the
  eigenvalues of
  \[
  Q_A+Q_A^\vee
  \quad\text{and}\quad
  (\#_{A}-\#_{A}^{\vee})/2
  \]
  applied to $x^\gamma y^\lambda e^I$. Here we are using $*^{A}(e_{1}e_{2})=1$,
  $*^{A}(e_{2})=-E_{A^{T},1}=-2\pi e_{1}$ and
  \[
  *^{A}(1)=E_{A^{T},1}E_{A^{T},2}=(2\pi e_{1})(\pi e_{1}+3\pi e_{2})=6\pi^{2}e_{1}e_{2}.
  \]
  Note also that 
  \[
  \Delta^{A}(x_{1}^{2}x_{2}e_{1}e_{2})=y_{1}^{2}y_{2}\equiv3\pi x_{1}x_{2}^{3}e_{1}e_{2},
  \]
  since $(d_{A^{T}}+d_{A^{T}}^{\vee})(e_{1})=3\pi x_{1}x_{2}^{3}e_{1}e_{2}+y_{1}^{2}y_{2}$. 

  \renewcommand{\arraystretch}{1.5}
  \begin{table}
    \[
    \begin{array}[t]{|c|c|c|c|c|c|c|c|}
      \hline
      \multirow{2}{*}{$G_{A}$} & \lambda &  (0,0) &   (1,0) &  (1,1) &  (1,2) &  (2,1) &  (2,2)\\
      & \lambda A^{-T} & (0,0) & (\frac{1}{2},0) & (\frac{1}{3},\frac{1}{3}) & (\frac{1}{6},\frac{2}{3}) & (\frac{5}{6},\frac{1}{3}) & (\frac{2}{3},\frac{2}{3})\\
      \hline
      \multirow{2}{*}{$G_{A^{T}}$} &   \lambda &  (0,0) &  (0,1) &  (0,2) &  (1,1) &  (1,2) &  (1,3)\\
      & \lambda A^{-1} & (0,0) & (0,\frac{1}{3}) & (0,\frac{2}{3}) & (\frac{1}{2},\frac{1}{6}) & (\frac{1}{2},\frac{1}{2}) & (\frac{1}{2},\frac{5}{6})\\
      \hline
    \end{array}
    \]
    \caption{ Elements of $G_A$ and $G_{A^T}$ for $W_A(x) = x_1^2 x_2 + x_2^3$. }
    \label{table:chain-2103-groups}
  \end{table}
  \renewcommand{\arraystretch}{1.0}

  \renewcommand{\arraystretch}{1.5}
  \begin{table}
    \[
    \begin{array}[t]{|c|c|c|c|c|c|}
      \hline
      \cC_{A} & Q_A + Q_A^\vee & (\#_{A}-\#_{A}^{\vee})/2 & \cC_{A^{T}} & Q_{A^T} + Q_{A^{T}}^{\vee} & (\#_{A^{T}}-\#_{A^{T}}^{\vee})/2\\
      \hline
      
      x_{1}x_{2}e_{1}e_{2} & 2 & 1 & y_{1}y_{2} & 4 & -1\\
      x_{1}x_{2}^{2}e_{1}e_{2} & 2 & 1 & y_{1}y_{2}^{2} & 4 & -1\\
      
      x_{1}x_{2}^{3}e_{1}e_{2} & 2 & 1 & y_{1}y_{2}^{3} & 4 & -1\\
      x_{1}^{2}x_{2}e_{1}e_{2} & 2 & 0 & x_{1}x_{2}^{3}e_{1}e_{2} & 2 & 0\\
      
      x_{2}y_{1}e_{2} & 3 & 0 & x_{1}y_{2}e_{1} & 3 & 0\\
      x_{2}^{2}y_{1}e_{2} & 3 & 0 & x_{1}y_{2}^{2}e_{1} & 3 & 0\\
      
      y_{1}y_{2} & 4 & -1 & x_{1}x_{2}e_{1}e_{2} & 2 & 1\\
      y_{1}y_{2}^{2} & 4 & -1 & x_{1}x_{2}^{2}e_{1}e_{2} & 2 & 1\\
      
      y_{1}^{2}y_{2} & 4 & -1 & x_{1}^{2}x_{2}e_{1}e_{2} & 2 & 1\\
      y_{1}^{2}y_{2}^{2} & 4 & -1 & x_{1}^{2}x_{2}^{2}e_{1}e_{2} & 2 & 1\\
      \hline
    \end{array}
    \]

    \protect\caption{Duality between $\cC_A$ and $\cC_{A^T}$ for $W_A(x) = x_1^2 x_2 + x_2^3$. }
    \label{table:chain-2103-duality}
  \end{table}
  \renewcommand{\arraystretch}{1.0}
  
  We now turn to writing $\TFr_{A}(x^{\gamma}y^{\lambda}e^{I})$ in
  terms of this basis for a few elements. Since for any $x^{\gamma}$,
  \begin{align*}
    (\theta_{A,1}+\varphi_{A,1})(x^{\gamma+e_{1}A}) & =\gamma_{1}x^{\gamma}+\pi\left(2x^{\gamma+e_{1}A}\right);\\
    (\theta_{A,2}+\varphi_{A,2})(x^{\gamma+e_{2}A}) & =\gamma_{2}x^{\gamma}+\pi\left(x^{\gamma+e_{1}A}+3x^{\gamma+e_{2}A}\right),
  \end{align*}
  in $H\left(\cB_{A}^{\lambda}(\F)\right)$ we have the relation
  \[
  \gamma x^{\gamma} y^\lambda e^{I}=(-\pi)(x^{\gamma+e_{1}A} y^\lambda e^I,x^{\gamma+e_{2}A} y^\lambda e^I)A,
  \]
  which implies for $i = 1 , 2$ that 
  \[
  x^{\gamma+e_{i}A}y^\lambda e^{I}=(-\pi)^{-1}\left(\gamma A^{-1}\right)_{i}x^{\gamma} y^\lambda e^{I}.
  \]
  Therefore, for $i = 1,2$,
  \begin{align}
    x^{\gamma+k_{i}e_{i}A} y^\lambda e^{I}= & (-\pi)^{-1}\left((\gamma+(k_{i}-1)e_{i}A)A^{-1}\right)_{i}x^{\gamma+(k_{i}-1)e_{i}A} y^\lambda e^{I}\nonumber \\
    = & (-\pi)^{-2}\left((\gamma A^{-1})_{i}+(k_{i}-1)\right)\left((\gamma A^{-1})_{i}+(k_{i}-2)\right)x^{\gamma+(k_{i}-2)e_{i}A} y^\lambda e^{I}\nonumber \\
    = & (-\pi)^{-k_{i}}\left((\gamma A^{-1})_{i}\right)_{(k_{i})}x^{\gamma} y^\lambda e^{I}.\label{eq:reduction-x}
  \end{align}
  Take $x_{1}x_{2}\,e_{1}e_{2}$ so that
  $\gamma=(1,1)$ and $\gamma A^{-1}=(\frac{1}{2},\frac{1}{6})$.
  Suppose that $p$ is a prime such that $6\mid(p-1)$. Then we can write 
  \[
  (p,p)=(1,1)+\left(\frac{p-1}{2}, \frac{p-1}{6} \right) A,
  \]
  which using equation (\ref{eq:reduction-x}) gives
  \begin{align*}
    \TFr_A(x_1 x_2\,e_1 e_2) 
    &= p^{2} (\kappa \pi)^{p-1} x_1^p x_2^p Z_A(x) e_1 e_2 \\
    & =p^3 x_{1}^{p}x_{2}^{p}\left(\sum_{k_{1}\geq0}(-\pi)^{k_{1}}c_{k_{1}}x^{k_{1}e_{1}A}\right)\left(\sum_{k_{2}\geq0}(-\pi)^{k_{2}}c_{k_{2}}x^{k_{2}e_{2}A}\right)e_{1}e_{2}\\
    & =p^3 (-\pi)^{-\frac{2(p-1)}{3}} \left(\sum_{k_{1},k_{2}\geq0}c_{k_{1}}c_{k_{2}}
       \left(\frac{1}{2}\right)_{(k_{1} + \frac{p-1}{2})}\left(\frac{1}{6}\right)_{(k_{2} + \frac{p-1}{6})}\right) x_{1}^{1}x_{2}^{1}\,e_{1}e_{2},
  \end{align*}
  where we have used the fact that
  $Z_{A^{T}}(y)=1+\mathcal{O}(y_{1},y_{2})$. Next, consider
  \[
  \TFr_A( x_2^2 y_1 e_2 ) = p^3  e^{\pi \left( x_2^{3p} - x_2^3 \right)}  e^{\pi  \left( y_1^{2p} - y_1^p \right)}  x_2^{2p} y_1^p \, e_2.
  \]
  By equation~(\ref{eq:cB_A^lambda-decomposition}), in cohomology we have the relation
  \[
  y^{\lambda + k_1' e_1 A^T} = (-\pi)^{-k_1'} \left( (\lambda A^{-T} )_1 \right)_{(k_1')} y^\lambda 
    = (-\pi)^{k_1'} \left( \frac{3\lambda_1 - \lambda_2}{6} \right)_{(k_1')} y^\lambda,
  \]
  which if $6 \mid (p-1)$ implies that
  \begin{equation*}
    \TFr_A ( x_2^2 y_2 e_2 ) 
    =p^3 (-\pi)^{-\frac{7(p-1)}{6}} \left( \sum_{k_1',k_2 \geq 0}
      c_{k_1'} c_{k_2}
      \left(\frac{1}{2}\right)_{\left( k_1' + \frac{p-1}{2} \right)}
      \left( \frac{2}{3}
      \right)_{\left( k_2 + \frac{2(p-1)}{3} \right)} \right) x_2^2
    y_1 e_2.
  \end{equation*}
  \end{example}

\begin{bibdiv}
\begin{biblist}
\bib{B1}{article}{
   author={Borisov, Lev A.},
   title={Vertex algebras and mirror symmetry},
   journal={Comm. Math. Phys.},
   volume={215},
   date={2001},
   number={3},
   pages={517--557},
}

  \bib{B2}{article}{
    title={{B}erglund-{H}{\"u}bsch mirror symmetry via vertex algebras},
    author={Borisov, Lev A.},
    journal={Communications in Mathematical Physics},
    volume={320},
    number={1},
    pages={73--99},
    year={2013},
    publisher={Springer}
  }

  \bib{BH}{article}{
    title={A generalized construction of mirror manifolds},
    author={{Berglund}, Per}
    author={{H{\"u}bsch}, Tristan},
    journal={Nuclear Physics B},
    volume={393},
    number={1},
    pages={377--391},
    year={1993},
    publisher={Elsevier}
  }

  \bib{GP}{article}{
    title={Duality in {C}alabi--{Y}au moduli space},
    author={{Greene}, Brian R.},
    author={{Plesser}, Ronen M.},
    journal={Nuclear Physics B},
    volume={338},
    number={1},
    pages={15--37},
    year={1990},
    publisher={Elsevier}
  }

     \bib{K}{article}{
   author={Kontsevich, Maxim},
   title={XI Solomon Lefschetz Memorial Lecture series: Hodge structures in
   non-commutative geometry},
   conference={
      title={Non-commutative geometry in mathematics and physics},
   },
   book={
      series={Contemp. Math.},
      volume={462},
      publisher={Amer. Math. Soc., Providence, RI},
   },
   date={2008},
   pages={1--21},
}

    \bib{Kre}{article}{
      title={The mirror map for invertible {LG} models},
      author={Kreuzer, Maximilian},
      journal={Physics Letters B},
      volume={328},
      number={3},
      pages={312--318},
      year={1994},
      publisher={Elsevier}
    }

    \bib{KS}{article}{
      title={On the classification of quasihomogeneous functions},
      author={Kreuzer, Maximillian},
      author={Skarke, Harald},
      journal={Communications in mathematical physics},
      volume={150},
      number={1},
      pages={137--147},
      year={1992},
      publisher={Springer}
    }

    \bib{Kra}{thesis}{
   author={Krawitz, Marc},
   title={FJRW rings and Landau-Ginzburg mirror symmetry},
   note={Thesis (Ph.D.)--University of Michigan},
   date={2010},
}

\bib{M}{book}{
   author={Monsky, Paul},
   title={$p$-adic analysis and zeta functions},
   series={Lectures in Mathematics, Department of Mathematics, Kyoto
   University},
   volume={4},
   publisher={Kinokuniya Book-Store Co., Ltd., Tokyo},
   date={1970},
   pages={iv+117},
}

  \bib{P}{thesis}{
   author={Perunicic, Andrija},
   title={Arithmetic Aspects of Berglund-Hubsch Duality},
   note={Thesis (Ph.D.)--Brandeis University},
   date={2013},
}

    \bib{SS}{article}{
      title={Twisted {de} {R}ham cohomology, homological definition of the integral and ``physics over a ring''},
      author={Schwarz, Albert},
      author={Shapiro, Ilya},
      journal={Nuclear Physics B},
      volume={809},
      number={3},
      pages={547--560},
      year={2009},
      publisher={Elsevier}
    }
    
    \bib{W}{article}{
   author={Wan, Daqing},
   title={Mirror symmetry for zeta functions},
   note={With an appendix by C. Douglas Haessig},
   conference={
      title={Mirror symmetry. V},
   },
   book={
      series={AMS/IP Stud. Adv. Math.},
      volume={38},
      publisher={Amer. Math. Soc., Providence, RI},
   },
   date={2006},
   pages={159--184},
}

\end{biblist}
\end{bibdiv}

\end{document}